\newtheorem{thm}{Theorem}
\newtheorem{lem}{Lemma}
\newtheorem{corollary}{Corollary}
\newtheorem{proposition}{Proposition}
\newtheorem{definition}{Definition}
\begin{document}

\title{{  Randomized vs. Orthogonal Spectrum Allocation in Decentralized  Networks:
Outage Analysis}}
\author{ Kamyar Moshksar, Alireza Bayesteh and Amir K. Khandani \\
\small Coding \& Signal Transmission Laboratory (www.cst.uwaterloo.ca)\\
Dept. of Elec. and Comp. Eng., University of Waterloo\\ Waterloo, ON, Canada, N2L 3G1 \\
Tel: 519-725-7338, Fax: 519-888-4338\\e-mail: \{kmoshksa, alireza,
khandani\}@cst.uwaterloo.ca} \maketitle
\begin{abstract}
We address a decentralized wireless communication network with a fixed number $u$ of
frequency sub-bands to be shared among $N$
transmitter-receiver pairs. It is assumed that the number of users $N$ is a random variable with a given distribution and the channel gains are quasi-static Rayleigh fading. The transmitters are assumed to be unaware of the number of active users in the network as well as the channel gains and not capable of detecting  the presence of other users in a given frequency sub-band. Moreover, the users are unaware of each other's codebooks and hence, no multiuser detection is possible.  We consider a randomized Frequency Hopping (FH) scheme in which each transmitter randomly hops over a subset of the $u$ sub-bands from transmission to transmission.  Developing a new upper bound on the differential entropy of a mixed Gaussian random vector and using entropy power inequality, we offer a series of lower bounds on the achievable rate of each user. Thereafter, we obtain lower bounds on the maximum transmission rate per user to ensure a specified outage probability at a given Signal-to-Noise Ratio (SNR) level. We demonstrate that  the so-called outage capacity can be considerably higher in the FH scheme than in the Frequency Division (FD) scenario for reasonable distributions on the number of active users. This guarantees a higher spectral efficiency in FH compared to FD. \end{abstract}
\vskip 1.5cm
\begin{keywords}
Frequency Hopping, Spectrum Sharing, Decentralized Networks, Mixed Gaussian Interference, Outage Capacity
\end{keywords}

\section{Introduction}
\subsection{Motivation and Related Works}
Optimal resource allocation is an imperative issue in wireless
networks. Wide applications of  wireless systems, in recent years, and the limited available resources in the network necessitate efficient usage of such resources.  Multiuser interference is known to be the most important factor, which degrades the network performance when multiple users share the same spectrum. There has been a tremendous amount of research on designing an efficient and low complexity resource allocation scheme that maximizes the quality of service per user while controlling the detrimental effect of multi-user interference. Existing resource allocation schemes in the literature are classified as \textit{centralized}, i.e., a central controller manages the resources, or \textit{decentralized}, where resource allocation is performed locally at each node.

 The main goal in traditional wireless systems was to avoid the interference between users by transmitting over orthogonal channels. A well-known example of such systems is the Frequency Division (FD) system, in which different users
transmit over disjoint frequency sub-bands. The assignment of frequency sub-bands is performed by a central controller. Despite its simplicity, \cite{1} proves that in a wireless network in which the interference is treated as noise (no multi-user detection is performed),  and if the
crossover gains are sufficiently greater than the forward gains, FD is Pareto-optimal. Due to practical
considerations, such FD systems have a fixed infrastructure, i.e., they rely on a fixed number of
 frequency sub-bands. If the number of users
changes, the system is not guaranteed to offer the best possible
spectral efficiency because, most of the time, the majority of the potential users may be
inactive.  In recent
years, many centralized power and spectrum allocation schemes have been extensively
studied in cellular and multihop wireless networks
\cite{ KumaranITWC0505, ElBattITWC0104,HollidayISIT2004,WassermanITWC0603, HanITC0805, KatzelaIPC0696, KianiISIT2006,
LiangITIT1007}. Clearly,
centralized resource allocation schemes provide a significant
improvement in the network throughput over decentralized
(distributed) approaches. However, they require extensive knowledge
of the network configuration. In particular, when the number of
nodes is large, deploying such centralized schemes may not be
practically feasible.

 Most of the decentralized schemes in the literature rely on either \textit{game-theoretic} approaches or \textit{cognitive radios}.  Cognitive radios \cite{2} have the ability to sense the unoccupied
portion of the available spectrum. Fundamental limits of wireless
networks with cognitive radios are studied in \cite{3,4,5,6,7}.
These smart radios require sophisticated techniques do detect the spectrum holes that
add to the overall system complexity \cite{8}. As such, it is essential to have a decentralized spectrum sharing strategy without using cognitive radios, which allows the users to coexist while utilizing the spectrum efficiently and fairly.

Due to its interference avoidance nature, hopping is the simplest
spectrum sharing method to use in decentralized networks. Frequency
Hopping (FH) is the most popular scenario in this category in which
users randomly switch to different frequency sub-bands from
transmission to transmission. As different users typically have no
prior information about the codebooks of the other users, the most
efficient method is avoiding interference by choosing unused
channels. As mentioned earlier, searching the spectrum to find
spectrum holes is not an easy task due to the dynamic spectrum
usage. As such, FH  is a realization of transmission
without sensing, while avoiding the collisions as much as possible.
Frequency Hopping is one of the standard signaling schemes\cite{15}
adopted in ad hoc networks. In short range scenarios, bluetooth
systems \cite{19,20,21} are the most popular examples of a wireless
personal area network, or WPAN.  Using FH over the
unlicensed ISM band, a bluetooth system provides robust
communication to unpredictable sources of interference. A
modification of Frequency Hopping called Dynamic Frequency Hopping
(DFH) selects the hopping pattern based on interference measurements
in order to avoid dominant interferers. The performance of a DFH
scheme when applied to a cellular system is assessed in
\cite{22,23,24}.

Although there has been  a tremendous amount of work on the performance
evaluation of hopping-based decentralized networks, there are only a
few information-theoretic results reflecting the fundamental limits
of such networks. In a pioneering work, \cite{kamyar1} offers a
clean analysis of a decentralized network where all users follow a
randomized hopping strategy to share the resources, while the number
of active users is a random variable with a given distribution. In
\cite{kamyar1}, the channel gains and the number of active users are
assumed to be static and known to the corresponding
transmitter-receiver nodes. This assumption makes the concept of
achievable rate in the Shannon sense meaningful.
 However, in case the channel gains and the number of active users are not known to the transmitters, the concept of achievable rate may no longer be valid. A common setup for such an assumption is a network where channel gains are quasi-static fading and unknown to the transmitters, which is the framework for this paper.

Rayleigh fading is an unavoidable phenomenon in wireless networks
that can affect the performance of the system significantly.
Traditionally, Rayleigh fading has been considered to be harmful due
to reducing the transmission reliability in wireless networks.
However, recently, researchers have been able to reduce this harmful
effect by exploiting the so-called \textit{multiuser diversity}
\cite{knopp,tse}. This can be considered a \textit{scheduling gain}
by allowing the users with favorable channels to be active. It is
shown that multiuser diversity gain can be as large as $\log \log N$
in broadcast and multiple-access channels \cite{sharif,alireza,yoo}
and $\log N$ in single-hop ad hoc networks \cite{masoud1,jamshid1}, as
$N$ grows to infinity. However, achieving this scheduling gain
requires the fading channels to vary over time such that all
possible realizations of the fading process are covered. In the case
that channel gains are selected randomly at the start of the
transmission and remain constant during the whole transmission
period (quasi-static fading), the channels do not have ergodic
behavior. In this case, a suitable performance measure is the
$\epsilon$-\textit{outage capacity} \cite{shamai}, denoted by
$R(\epsilon)$, which is defined as the  maximum transmission rate
per user,  ensuring an outage probability below $\epsilon$, i.e.,
$$R(\epsilon)=\sup\{R: \Pr\{\mathrm{Outage}\}<\epsilon\}.$$

The reality of wireless channel is more complicated to be simply
represented by Rayleigh fading model. A class of channel models considered in the literature is the one in
which the signal power decays according to a distance-based
attenuation law \cite{M1,M2,M3,M4,M5,M6,M7,M8}. Moreover,  the presence of obstacles
adds some randomness (known as shadowing) to the received signal. It
is well known that the effects of such random phenomena can
significantly affect the throughput of a spectrum sharing network in
both multi-hop \cite{M9,M10,M11,M12} and single-hop scenarios \cite{M13}
(Chapter 8), \cite{M14,M15,M16,M17,M18}. These features indeed increase the frequency reuse factor as they
will attenuate the interference caused by a given transmitter on its
neighboring receivers.
In spite of the significance of the
effects of distance-based attenuation and shadowing on the
throughput of a spectrum sharing system, unfortunately, there is not
a single commonly accepted model to represent these factors. It
should be emphasized that the inclusion of signal attenuation due to
distance and/or shadowing will indeed simplify the spectrum sharing
as the multi-user interference will be attenuated and consequently
its harmful effect will be reduced. On the other hand, these factors
do not impact the performance of the orthogonal schemes in which the
multi-user interference is altogether avoided. In spite of this
fact, as the actual model used to represent distance-based
attenuation and shadowing can have a profound impact on the system
throughput (to the advantage of the randomized spectrum sharing
schemes advocated in the current article), to avid any confusion, we
have relied on a simple Rayleigh fading model which in some sense
captures the minimum advantage offered by the proposed scheme vs.
those based on orthogonal separation of users. 

Reference
\cite{masoud} studies a wireless network composed of a set of
transmitter/receiver pairs in which a given link can be off or
transmit with a constant power. \cite{masoud} considers both the
case of  Rayleigh fading as well as a Rayleigh fading mixed with a
proper distance-based attenuation and among other results provides a
comparison between the scaling (with respect to the number of links)
of the throughput in these two cases.

In \cite{Jindal,Jindall}, the authors study  a decentralized
wireless ad hoc network where different transmitters are connected
to different receivers through channels with a similar path loss
exponent. Assuming the transmitters are scattered over the two
dimensional plane according to a Poisson point process, a fixed
bandwidth is partitioned into a certain number of sub-bands, such
that the so-called transmission intensity in the network is
maximized, while the probability of outage per user is below a
certain threshold\cite{Jindal}. The transmission strategy is based
on choosing one sub-band randomly per transmission, which is a
special case of Frequency Hopping. In \cite{Jindall}, a
non-iterative and distributed power control scheme is introduced, for
which the constant power and the channel inversion schemes are
extreme cases. It is observed that none of these cases are ideal in
general. In fact, it is shown that regulating the transmission power
proportionately to the inverse square root of the forward channel
strength minimizes the outage probability.

Recently, Orthogonal Frequency Division Multiplexing (OFDM) has been
considered as a promising technique in many wireless technologies.
OFDM partitions a wide-band channel to a group of narrow-band
orthogonal sub-channels. This motivates us to consider the
underlying system to consist of $u$ narrow-band orthogonal frequency
sub-bands.

\subsection{Our Contribution}
 In this paper, we consider a decentralized wireless communication network with a fixed number $u$ of
frequency sub-bands to be shared among $N$
transmitter-receiver pairs. Any transmitter is connected to any receiver through a channel with quasi-static and non-frequency selective Rayleigh fading. In other words, the channel gains are picked randomly (based on Rayleigh distribution) at the start of the transmission and remain fixed for the whole transmission. It is assumed that the number of active users is a random variable with a given probability mass function. The channel gains and the number of active users are unknown to all transmitters, however, the receivers are assumed to be aware of their direct channel gains and the interference Probability Density Function (PDF). Moreover, users are unaware of each other's codebooks and hence, no multiuser detection is possible. A randomized Frequency Hopping scheme is proposed in which each transmitter randomly hops over $v$ out of $u$ sub-bands from transmission to transmission. Assuming i.i.d. Gaussian signals are transmitted over the chosen sub-bands, the distribution of the noise plus interference becomes mixed Gaussian, which makes calculation of the achievable rate complicated.
The main contributions of the paper are:

\begin{itemize}
 \item   Developing a new upper bound on the differential entropy of a class of mixed Gaussian random vectors and using entropy power inequality, we offer three lower bounds on the $\epsilon$-outage capacity of each user denoted by $R_{\mathrm{FH,lb}}^{(1)} (\epsilon)$,  $R_{\mathrm{FH,lb}}^{(2)} (\epsilon)$, and $R_{\mathrm{FH,lb}}^{(3)} (\epsilon)$. To evaluate the system performance analytically, we use $R_{\mathrm{FH,lb}}^{(3)}(\epsilon)$, which can be computed easily. However, computation of  $R_{\mathrm{FH,lb}}^{(1)} (\epsilon)$ and  $R_{\mathrm{FH,lb}}^{(2)} (\epsilon)$ involves integrations that cannot be carried out in a closed form. In the simulation results, we use the lower bounds $R_{\mathrm{FH,lb}}^{(1)} (\epsilon)$ and  $R_{\mathrm{FH,lb}}^{(2)} (\epsilon)$, which are tighter than $R_{\mathrm{FH,lb}}^{(3)} (\epsilon)$.

 \item We perform asymptotic analysis for the outage capacity in terms of $\epsilon$ and SNR. In the asymptotically small $\epsilon$ regime, we observe that the maximum of outage capacity is obtained  for either $v=1$ or $v=u$. In the asymptotically small SNR regime, we demonstrate that for any value of $v$ the system achieves the optimal performance. For asymptotically large values of SNR, it is shown that $v_{\mathrm{opt}}=\left\lceil \frac{u}{n_{\max}}\right\rceil$, where $n_{\max}$ is the maximum possible number of concurrently active users in the network.

 \item We compare the outage capacity of the underlying FH scenario with that of the FD scheme for various scenarios in terms of  distributions on the number of active users, SNR and $\epsilon$. It is shown that FH outperforms FD in terms of outage capacity in many cases. We observe that in the low SNR regime, FH and FD offer the same performance. In the low $\epsilon$ regime, FD is always better than FH, however, for many practical scenarios there exists a threshold, $\epsilon_{\mathrm{th}}$, such that FH outperforms FD as far as $\epsilon \geq \epsilon_{\mathrm{th}}$. Also, we have shown that supremacy of FH over FD in the high SNR regime occurs quite often.
 \end{itemize}

 The paper outline is as follows. The system model is given in section II. Section III describes analysis of the outage capacity. Section IV is devoted to derive lower bounds on the achievable rates of users. Also, in this section, we offer a new computable upper bound on the differential entropy of a mixed Gaussian random vector. In section V, based on the results in sections III and IV, we discuss how the users in the FH system fairly share the spectrum while maximizing the outage capacity. Derivation of various lower bounds on the outage capacity of users is part of the materials in this section. Comparison between the FH and FD scenarios is given in section VI through the analysis and simulation results. Finally, section VII concludes the paper.

 \subsection{Notation}
Throughout the paper, we use the notation $\mathrm{E} \{.\}$ for the
expectation operator. For a function $f(X,Y)$ of two independent random variables $X$ and $Y$, $\mathrm{E}_{X}\left\{f(X,Y)\right\}$ denotes the expectation of $f(X,Y)$ with respect to $X$ while $Y$ is treated as a parameter. $\mathrm{Pr}\{\mathcal{E}\}$ denotes the
probability of an event $\mathcal{E}$, $\mathbb{1}(\mathcal{E})$ the
indicator function of an event $\mathcal{E}$ and $p_{X}(.)$  the PDF of a random variable $X$. Also,
$\mathrm{I} (X;Y)$ denotes the mutual information between random
variables $X$ and $Y$ and $\mathrm{h} (X)$ denotes the differential entropy of
a continuous random variable $X$.

\section{System Model and Assumptions}
We consider a wireless network with $N$ users\footnote{Each user consists of a transmitter-receiver pair.} operating on a bandwidth consisting of $u$ sub-bands. It is assumed that the $i^{th}$ user exploits $v_{i}$  out of the $u$  sub-bands in each transmission and hops randomly to another set of $v_{i}$ frequency sub-bands in the next transmission.  This user transmits independent complex Gaussian signals\footnote{Since in this work we deal with fading channels, the transmitted signals are assumed to be complex for simplicity of analysis.} of variance $\frac{P}{v_{i}}$ over each of the chosen sub-bands in which $P$ denotes the total average power of each transmitter. Each receiver is assumed to know the hopping pattern of its affiliated transmitter. It is assumed that the users are not aware of each other's codebooks and hence, no interference cancellation is performed at the receiver sides. The quasi-static and non frequency-selective fading coefficient\footnote{The knowledge of the channel gains and the number of active users at the receiver side is realized by identifying the interference PDF.} of the channel from the $i^{th}$ transmitter to the $j^{th}$ receiver is shown by $h_{i,j}$. All the channel coefficients in the network are assumed to be complex zero-mean Gaussian random variables of unit variance corresponding to Rayleigh fading. However, due to the absence of any feedback link, the transmitters do not have information about any of the channel gains. By the same token, the transmitters are not aware of the number of active users in the network.

As all users hop over different portions of the spectrum from transmission to transmission, no receiver is assumed to be capable of tracking the instantaneous interference. This assumption makes the interference plus noise PDF at the receiver side of each user be mixed Gaussian. In fact, depending on different choices the other users make to select the frequency sub-bands and values of the crossover gains, the interference on each frequency sub-band at the receiver side of any user has up to $2^{N-1}$ power levels\footnote{It is notable that the interference plus noise PDF has $2^{N-1}$ power levels, almost surely, as the channel gains are considered to be continuous random variables.}. The vector consisting of the received signals on the frequency sub-bands at the $i^{th}$ receiver in a typical transmission slot is
\begin{equation}
\vec{Y}_{i}=h_{i,i}\vec{X}_{i}+\vec{Z}_{i},
\end{equation}
where $\vec{X}_{i}$ is the $u\times 1$ transmitted vector and $\vec{Z_{i}}$ is the noise plus interference vector on the receiver side of the $i^{th}$ user. Due to the fact that each transmitter hops randomly from transmission to transmission, one may write $p_{\vec{X}_{i}}(.)$ as
\begin{eqnarray}
p_{\vec{X}_{i}}(\vec{x})=\sum_{C\in \mathfrak{C}}\frac{1}{{u\choose v_{i}}}g_{u}(\vec{x},C),
\end{eqnarray}
 which corresponds to the mixed Gaussian distribution. In the above equation, $g_{u}(\vec{x},C)$ denotes the PDF of a $u\times 1$ complex zero-mean jointly Gaussian vector of covariance matrix $C$ and the set $\mathfrak{C}$ includes all $u\times u$ diagonal matrices in which $v_{i}$ out of the $u$ diagonal elements are $\frac{P}{v_{i}}$ and the rest are zeros. Denoting the noise plus interference on the $j^{th}$ sub-band at the receiver side of the $i^{th}$ user by $Z_{i,j}$ (the $j^{th}$ component of $\vec{Z}_{i}$), it is clear that $p_{Z_{i,j}}(.)$ is not dependent on $j$. This is by the fact that crossover gains are not sensitive to frequency and there is no particular interest to a specific frequency sub-band by any user. We assume there are $L_{i}+1$ ($L_{i}\leq 2^{N-1}-1$) possible nonzero power levels for $Z_{i,j}$, say $\{\sigma^{2}_{i,l}\}_{l=0}^{L_{i}}$. Denoting the occurrence probability of $\sigma^{2}_{i,l}$ by $a_{i,l}$, $p_{Z_{i,j}}(.)$ is given by
\begin{equation}
\label{hf}
p_{Z_{i,j}}(z)=\sum_{l=0}^{L_{i}}\frac{a_{i,l}}{\pi\sigma_{i,l}^{2}}\exp \left(-\frac{|z|^2}{\sigma_{i,l}^{2}} \right),
\end{equation}
where  $\sigma^{2}=\sigma_{i,0}^{2}< \sigma_{i,1}^{2}< \sigma_{i,2}^{2}<...<\sigma_{i,L_{i}}^{2}$ ($\sigma^{2}$ is the ambient noise power). We notice that for each $l\geq 0$, there exists a $c_{i,l} \geq 0$ such that $\sigma_{i,l}^{2}=\sigma^{2}+c_{i,l}P$ where $0=c_{i,0}<c_{i,1}<c_{i,2}<...<c_{i,L_{i}}$. In fact, one may write $Z_{i,j}=  \sum_{\substack{k=1\\k\neq i}}^{N}\epsilon_{k,j}h_{k,i}X_{k,j}+\nu_{i,j}$ where $X_{k,j}$ is the signal of the $k^{th}$ user sent on the $j^{th}$ sub-band, $ \epsilon_{k,j}$ is a Bernoulli random variable showing if the $k^{th}$ user has utilized the $j^{th}$ sub-band and $\nu_{i,j}$ is the ambient noise, which is a zero-mean complex Gaussian random variable with variance $\sigma^{2}$. The ratio $\frac{P}{\sigma^{2}}$ is taken as a measure of SNR as is denoted by $\gamma$ throughout the paper.

Since the transmitters are not aware of the channel gains and the number of active users in the network, the Shannon capacity is not meaningful in this setup. In this case, a suitable performance measure is the $\epsilon$-\textit{outage capacity}, denoted by $R(\epsilon)$, which is defined as the  maximum transmission rate per user  ensuring an outage probability below $\epsilon$, i.e.,
\begin{equation}R(\epsilon)=\sup\{R: \Pr\{\mathrm{Outage}\}<\epsilon\}.\end{equation}

\section{Analysis of the Outage Capacity}
Let $\vec{h}_{i}$  contain the channel coefficients concerning the $i^{th}$ user, i.e.,  $\vec{h}_{i}=\begin{pmatrix}
    h_{1,i}  & \cdots& h_{N,i}
\end{pmatrix}^{T}$. In this case, we denote the achievable rate of the $i^{th}$ user by  $\mathscr{R}_{i}(\vec{h}_{i})$. The outage event for this user is
\begin{equation}
 \mathcal{O}_{i} (R)\triangleq\{\vec{h}_{i}: \mathscr{R}_{i} (\vec{h}_{i}) <R\},
\end{equation}
where $R$ is the actual transmission rate. Hence,
\begin{equation}
\label{kootool}
 R(\epsilon)=\mathrm{sup}\Big\{R: \Pr\{\mathcal{O}_{i} (R)\}<\epsilon\Big\}.
\end{equation}
We emphasize that the randomness of the number of active users is involved in the outage event, as $N$ represents the size of $\vec{h}_{i}$. Moreover, due to symmetry, the $\epsilon$-outage capacity is the same for all users.

Having $\vec{h}_{i}$ fixed, it can be observed that the communication channel of the $i^{th}$ user is a channel with state $S_{i}$, the hopping pattern of the $i^{th}$ user, which is independently changing over different transmissions and known to both the transmitter and receiver. Hence,
\begin{equation}
\mathscr{R}_{i} (\vec{h}_{i})=\mathrm{I}(\vec{X}_{i};\vec{Y}_{i}\vert S_{i})=\sum_{s_{i}\in \mathfrak{S}_{i}}\Pr(S_{i}=s_{i})\mathrm{I}(\vec{X}_{i};\vec{Y}_{i}\vert S_{i}=s_{i}),
\end{equation}
where $\mathrm{I}(\vec{X}_{i};\vec{Y}_{i}\vert S_{i}=s_{i})$ is the mutual information between $\vec{X_{i}}$ and $\vec{Y_{i}}$ for the specific sub-band selection dictated by $S_{i}=s_{i}$. The set $\mathfrak{S}_{i}$ denotes all possible selections of $v_{i}$ out of the $u$ sub-bands. As $p_{\vec{Z_{i}}}(.)$ is a symmetric density function, meaning all its components have the same PDF given in (\ref{hf}), we deduce that $\mathrm{I}(\vec{X}_{i};\vec{Y}_{i}\vert S_{i}=s_{i})$ is independent of $s_{i}$. Therefore, to calculate $\mathscr{R}_{i}(\vec{h}_{i})$, we may assume any specific sub-band selection for the $i^{th}$ user in $\mathfrak{S}_{i}$, say the first $v_{i}$ sub-bands. Denoting this specific state by $s^{*}_{i}$, we get
\begin{equation}
\mathscr{R}_{i} (\vec{h}_{i})=\mathrm{I}(\vec{X}_{i};\vec{Y}_{i}\vert S_{i}=s^{*}_{i}).
\end{equation}
In this case, we denote $\vec{Y}_{i}$ and $\vec{X}_{i}$ by $\vec{Y}_{i}(s^{*}_{i})$ and $\vec{X}_{i}(s^{*}_{i})$ respectively. Obviously,
\begin{equation}
\label{lala}
\mathscr{R}_{i} (\vec{h}_{i})=\mathrm{I}(\vec{X}_{i}(s^{*}_{i});\vec{Y}_{i}(s^{*}_{i}))=\mathrm{h}(\vec{Y}_{i}(s^{*}_{i}))-\mathrm{h}(\vec{Z}_{i}).
\end{equation}
As $\vec{Y}_{i}(s^{*}_{i})$ and $\vec{Z}_{i}$ are complex mixed Gaussian vectors, there is no closed expression for the differential entropy of these vectors. As such, we provide a lower bound $\mathscr{R}_{i,\mathrm{lb}}(\vec{h}_{i})$ on $\mathscr{R}_{i}(\vec{h}_{i})$ in the following section. Subsequently, using $\mathscr{R}_{i,\mathrm{lb}} (\vec{h}_{i})$, we derive a lower bound on the outage capacity of the $i^{th}$ user as
\begin{equation}
 R_{\mathrm{lb}}(\epsilon)=\mathrm{sup}\Big\{R: \Pr\{\vec{h}_{i}:\mathscr{R}_{i,\mathrm{lb}}(\vec{h}_{i})<R\}<\epsilon\Big\},
\end{equation}
and show that this lower bound is higher than the actual outage capacity in the FD scheme in many scenarios.
\section{Lower Bounds on $\mathscr{R}_{i}(\vec{h}_{i})$}

The aim of this section is to find a lower bound on $\mathscr{R}_{i} (\vec{h}_{i})$.  The idea behind deriving this lower bound is to invoke entropy power inequality (EPI). As we will see, this initial lower bound is not in a closed form as it depends on the differential entropy of a mixed Gaussian random variable. In appendix A,  we obtain an appropriate upper bound on such an entropy, which leads us to the final lower bound on $\mathscr{R}_{i} (\vec{h}_{i})$.

\begin{thm} \label{kuyyy1}
There exists a lower bound on $\mathscr{R}_{i} (\vec{h}_{i})$ given by
\begin{eqnarray}
 \mathscr{R}_{i} (\vec{h}_{i}) &\geq& v_{i}\log\left(\frac{2^{-\mathscr{H}_{i}}2^{\mathscr{G}_{i}}\vert h_{i,i}\vert^{2}\gamma}{v_{i}\prod_{l=1}^{L_{i}}(c_{i,l}\gamma+1)^{a_{i,l}}}+1\right),
\end{eqnarray}
where
$\mathscr{H}_{i}=-\sum_{l=0}^{L_{i}}a_{i,l}\log a_{i,l}$ and
$\mathscr{G}_{i}=\frac{\sigma}{\sigma_{i,L_{i}}}\sum_{l=1}^{L_{i}}a_{i,l}\log \left(1+\frac{\sigma_{i,L_{i}}}{\sigma}\frac{\sum_{m=0}^{l-1}a_{i,m}}{a_{i,l}} \right)$.
\end{thm}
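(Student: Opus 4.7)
The plan is to start from the identity $\mathscr{R}_i(\vec h_i) = \mathrm{h}(\vec Y_i(s_i^*)) - \mathrm{h}(\vec Z_i)$ in (\ref{lala}) and bound the two differential entropies separately. Because the chosen state $s_i^*$ corresponds to transmitting on only the first $v_i$ sub-bands, the mutual information is supported on a $v_i$-dimensional subspace: on the inactive sub-bands the received signal equals the interference-plus-noise component and carries no information about $\vec X_i$. I would therefore restrict attention to the $v_i$-dimensional active components $\vec Y_i^{a} = h_{i,i}\vec X_i^{a} + \vec Z_i^{a}$, where $\vec X_i^{a}$ has i.i.d.\ complex Gaussian entries of variance $P/v_i$ and hence $\mathrm{h}(h_{i,i}\vec X_i^{a}) = v_i\log(\pi e\,|h_{i,i}|^2 P/v_i)$.

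Next I would invoke the complex-valued entropy power inequality in dimension $v_i$ applied to the independent summands $h_{i,i}\vec X_i^{a}$ and $\vec Z_i^{a}$, obtaining
\[
2^{\mathrm{h}(\vec Y_i^{a})/v_i} \;\geq\; \pi e\,|h_{i,i}|^2 P/v_i \;+\; 2^{\mathrm{h}(\vec Z_i^{a})/v_i}.
\]
Subtracting $\mathrm{h}(\vec Z_i^{a})$ from $\mathrm{h}(\vec Y_i^{a})$ and merging the logarithms leads to
\[
\mathscr{R}_i(\vec h_i) \;\geq\; v_i\log\!\left(1 + \frac{\pi e\,|h_{i,i}|^2 P/v_i}{2^{\mathrm{h}(\vec Z_i^{a})/v_i}}\right).
\]
To upper bound the denominator I would use subadditivity of differential entropy together with the fact that each component $Z_{i,j}$ has the common mixed Gaussian law (\ref{hf}), so $\mathrm{h}(\vec Z_i^{a}) \leq v_i\,\mathrm{h}(Z_{i,1})$ and therefore $2^{\mathrm{h}(\vec Z_i^{a})/v_i} \leq 2^{\mathrm{h}(Z_{i,1})}$.

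All that remains is an explicit upper bound on $\mathrm{h}(Z_{i,1})$, the differential entropy of a scalar complex mixed Gaussian with weights $\{a_{i,l}\}$ on variances $\sigma_{i,l}^2 = \sigma^2(1+c_{i,l}\gamma)$. I would invoke the novel bound developed in Appendix A, which refines the routine mixture inequality $\mathrm{h}(Z_{i,1}) \leq \mathscr{H}_i + \sum_l a_{i,l}\log(\pi e\sigma_{i,l}^2)$ by subtracting a correction $\mathscr{G}_i$, giving
\[
\mathrm{h}(Z_{i,1}) \;\leq\; \log(\pi e\sigma^2) + \mathscr{H}_i - \mathscr{G}_i + \sum_{l=1}^{L_i} a_{i,l}\log(1 + c_{i,l}\gamma).
\]
Substituting this into the previous display and using $\gamma = P/\sigma^2$ causes the $\pi e$ and $\sigma^2$ factors to cancel, leaving exactly the stated bound. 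The main obstacle is clearly this last step: the standard $H(a) + \sum_l a_l\,\mathrm{h}(p_l)$ bound for mixture entropies is too loose for the desired rate expression, so the substantive novelty reduces to the sharpening by $\mathscr{G}_i$. I would expect Appendix A to construct an integrable dominating function for $\log(1/p_{Z_{i,j}})$ that exploits the ordering $\sigma^2 = \sigma_{i,0}^2 < \sigma_{i,1}^2 < \cdots < \sigma_{i,L_i}^2$; the prefactor $\sigma/\sigma_{i,L_i}$ and the cumulative weights $\sum_{m<l} a_{i,m}$ appearing inside $\mathscr{G}_i$ strongly suggest a tail-comparison argument anchored at the smallest-variance component, with the remaining EPI and subadditivity steps being mechanical.
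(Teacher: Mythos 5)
Your main-line argument is exactly the paper's: restrict to the $v_i$ active sub-bands (the paper phrases this only as the inequality $\mathrm{I}(\vec{X}_i;\vec{Y}_i)\geq \mathrm{I}(\vec{X}'_i;\vec{Y}'_i)$ for the subvector, which is all that is needed; your stronger claim that the inactive sub-bands carry \emph{no} information is not obviously true, since the interference is correlated across sub-bands, but it is harmless here), apply the complex EPI in dimension $v_i$, use subadditivity $\mathrm{h}(\vec{Z}'_i)\leq v_i\,\mathrm{h}(Z_{i,1})$, and insert a scalar mixed-Gaussian entropy bound of the form $\sum_l a_{i,l}\log(\pi e\sigma_{i,l}^2)+\mathscr{H}_i-\mathscr{G}_i$; your bookkeeping with $\gamma=P/\sigma^2$ reproducing the stated expression is correct. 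The one substantive piece you do not supply is the proof of that entropy bound itself, which is precisely where the specific form of $\mathscr{G}_i$ (the $\sigma^2/\sigma_{i,L_i}^2$ prefactor and the cumulative weights $\sum_{m<l}a_{i,m}$) must come from, and your guess at its mechanism (an integrable dominating function / tail comparison for $\log(1/p_{Z})$) is not how the paper obtains it. The paper instead writes $\int p_{\vec\Theta}\log p_{\vec\Theta}=\sum_l J_l$ with $J_l=p_l\int g_t(\vec\theta,C_l)\log p_{\vec\Theta}(\vec\theta)\,d\vec\theta$, splits $\log p_{\vec\Theta}$ around the $l$-th mixture component, bounds the likelihood ratios $g_t(\vec\theta,C_m)/g_t(\vec\theta,C_l)$ from below by $\varrho_l^{2t}/\varrho_m^{2t}$ for $m>l$ and from above by the analogous constant for $m<l$ using the variance ordering, and then applies the interpolation inequality $\log(1+b+x)\geq\left(1-\frac{x}{a}\right)\log(1+b)+\frac{x}{a}\log(1+a+b)$ to extract the correction term, followed by two further monotonicity relaxations to reach the stated $\mathscr{G}_i$. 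Since everything else in your proposal is mechanical once that lemma is granted, this missing proof is the only real gap.
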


\begin{proof}
Let us define $\vec{X}'_{i}$ to be the $v_{i}\times 1$ signal vector\footnote{$\vec{X}'_{i}$ consists of the first $v_{i}$ elements of $\vec{X}_{i}(s_{i}^{*})$.} of the $i^{th}$ transmitter that is sent through the first $v_{i}$ frequency sub-bands. Let $\vec{Y}'_{i}=h_{i,i}\vec{X}'_{i}+\vec{Z}'_{i}$, where $\vec{Z}'_{i}$ is the noise plus interference vector at the receiver side of the $i^{th}$ user on the first $v_{i}$ frequency sub-bands. According to the classical EPI\footnote{As we deal with complex random vectors, the format of EPI is different from its real counterpart.},
\begin{equation}
2^{\frac{1}{v_{i}}\mathrm{h}(\vec{Y}'_{i})}\geq 2^{\frac{1}{v_{i}}\mathrm{h}(h_{i,i}\vec{X}'_{i})}+2^{\frac{1}{v_{i}}\mathrm{h}(\vec{Z}'_{i})}.
\end{equation}
Dividing both sides by $2^{\frac{1}{v_{i}}\mathrm{h}(\vec{Z}'_{i})}$,
\begin{equation}
\label{piu}
\mathrm{h}(\vec{Y}'_{i})-\mathrm{h}(\vec{Z}'_{i})\geq v_{i}\log\left(2^{\frac{1}{v_{i}}\left(\mathrm{h}(h_{i,i}\vec{X}'_{i})-\mathrm{h}(\vec{Z}'_{i})\right)}+1\right).
\end{equation}
On the other hand, since $\vec{Y}'_{i}$ is a subvector of $\vec{Y}_{i}(s^{*}_{i})$,
\begin{equation}
\label{qiu}
\mathscr{R}_{i} (\vec{h}_{i}) =\mathrm{I}(\vec{X}_{i}(s^{*}_{i});\vec{Y}_{i}(s^{*}_{i}))\geq \mathrm{I}(\vec{X}'_{i};\vec{Y}'_{i})=\mathrm{h}(\vec{Y}'_{i})-\mathrm{h}(\vec{Z}'_{i}).
\end{equation}
Based on (\ref{piu}) and (\ref{qiu}), we get the following lower bound on $\mathscr{R}_{i}(\vec{h}_{i})$,
\begin{equation}
\label{uuu}
\mathscr{R}_{i} (\vec{h}_{i}) \geq v_{i}\log(2^{\frac{1}{v_{i}}\left(\mathrm{h}(h_{i,i}\vec{X}'_{i})-\mathrm{h}(\vec{Z}'_{i})\right)}+1).
\end{equation}
Clearly, $\mathrm{h}(h_{i,i}\vec{X}'_{i})=v_{i}\log\left(\pi e\frac{\vert h_{i,i}\vert^{2} P}{v_{i}}\right)$. As $\vec{Z}'_{i}$ has a mixed Gaussian distribution, there is no closed formula for $\mathrm{h}(\vec{Z}'_{i})$. To circumvent this difficulty, we have to find an appropriate upper bound on $\mathrm{h}(\vec{Z}'_{i})$.

We start with the following Lemma:
  \begin{lem} \label{dool1}
  Let $\vec{\Theta}$ be a $t\times 1$ complex mixed Gaussian random vector with different covariance matrices $\{\varrho_{l}^{2}I_{t}\}_{l=1}^{L}$ and corresponding probabilities $\{p_{l}\}_{l=1}^{L}$ where $\varrho_{1}^{2}<\cdots<\varrho_{L}^{2}$ and $\sum_{l=1}^{L}p_{l}=1$. Then,
  \begin{equation}
 \mathrm{h}(\vec{\Theta})\leq t \sum_{l=1}^{L}p_{l}\log \left(\pi e\varrho_{l}^{2} \right)+\mathscr{H}-\mathscr{G}
  \end{equation}
  where \begin{equation}\mathscr{H}=-\sum_{l=1}^{L} p_{l}\log p_{l}\end{equation} and
  \begin{equation}
  \mathscr{G}=\frac{\varrho_{1}^{2t}}{\varrho_{L}^{2t}}\sum_{l=2}^{L}p_{l}\log \left(1+\frac{\varrho_{L}^{2t}}{\varrho_{1}^{2t}}\frac{\sum_{m=1}^{l-1}p_{m}}{p_{l}}\right).\end{equation}

    \end{lem}

  \begin{proof}
  See appendix A.
  \end{proof}

Using the chain rule for differential entropy,
\begin{equation}
\label{hjk}
\mathrm{h}(\vec{Z}'_{i})\leq\sum_{j=1}^{v_{i}}\mathrm{h}(Z_{i,j}).
\end{equation}
Applying the special case of Lemma \ref{dool1} corresponding to $t=1$ for a scalar complex mixed Gaussian random variable given in (\ref{hf}),
\begin{equation}
\label{ma}
\mathrm{h}(Z_{i,j})\leq \sum_{l=0}^{L_{i}}a_{i,l}\log \left(\pi e\sigma_{i,l}^{2}\right)+\mathscr{H}_{i}-\mathscr{G}_{i},\end{equation}
where
\begin{equation}\mathscr{H}_{i}=-\sum_{l=0}^{L_{i}}a_{i,l}\log a_{i,l}
\end{equation}
and
\begin{eqnarray} \label{ol}
\mathscr{G}_{i} &=&\frac{\sigma^2}{\sigma^2_{i,L_{i}}}\sum_{l=1}^{L_{i}}a_{i,l}\log \left(1+\frac{\sigma^2_{i,L_{i}}}{\sigma^2}\frac{\sum_{m=0}^{l-1}a_{i,m}}{a_{i,l}} \right)\notag\\
&=& \frac{1}{c_{i,L_i} \gamma +1} \sum_{l=1}^{L_{i}}a_{i,l}\log \left(1+\frac{(c_{i,L_i} \gamma +1) \sum_{m=0}^{l-1}a_{i,m}}{a_{i,l}} \right).
\end{eqnarray}
Substituting (\ref{ma}) in (\ref{hjk}) gives an upper bound on $\mathrm{h}(\vec{Z}'_{i})$ as
\begin{eqnarray} \label{gooz1}
 \mathrm{h}(\vec{Z}'_{i})\leq v_i\sum_{l=0}^{L_{i}}a_{i,l}\log \left(\pi e\sigma_{i,l}^{2}\right)+v_i \left(\mathscr{H}_{i}-\mathscr{G}_{i} \right).
\end{eqnarray}
By (\ref{gooz1}) and (\ref{uuu}),
\begin{eqnarray} \label{kood}
\mathscr{R}_{i} (\vec{h}_i)&\geq& v_{i}\log\left(2^{\frac{1}{v_{i}}\left(\log\left(\pi e\frac{\vert h_{i,i}\vert^{2}P}{v_{i}}\right)^{v_{i}}-v_{i}\big(\sum_{l=0}^{L_{i}}a_{i,l}\log(\pi e\sigma_{i,l}^{2})+\mathscr{H}_{i}-\mathscr{G}_{i}\big)\right)}+1\right) \notag\\
\label{uuku}
&=& v_{i}\log\left(\frac{2^{-\mathscr{H}_{i}}2^{\mathscr{G}_{i}}\vert h_{i,i}\vert^{2}\gamma}{v_{i}\prod_{l=1}^{L_{i}}(c_{i,l}\gamma+1)^{a_{i,l}}}+1\right).
\end{eqnarray}
\end{proof}
Let us define
\begin{equation}
\label{ }
\mathscr{R}^{(1)}_{i,\mathrm{lb}} (\vec{h}_i)\triangleq  v_{i}\log\left(\frac{2^{-\mathscr{H}_{i}}2^{\mathscr{G}_{i}}\vert h_{i,i}\vert^{2}\gamma}{v_{i}\prod_{l=1}^{L_{i}}(c_{i,l}\gamma+1)^{a_{i,l}}}+1\right).\end{equation}

The above proof reveals the following observations:

\textit{Observation 1-} As $L_{i}=2^{N-1}-1$ with a probability of $1$, it can be immediately verified that $\mathscr{H}_{i}$ does not depend on the crossover gains. However, $\mathscr{G}_{i}$  is implicitly a function of all crossover gains as the partial sums $\sum_{m=1}^{l-1}a_{i,m}$ for $2\leq l\leq L_{i}$ depend on the ordering of the crossover gains. This will be  investigated more in Lemma \ref{dool2}.


\textit{Observation 2-} Since $\prod_{l=1}^{L_{i}}(c_{i,l}\gamma+1)^{a_{i,l}}\leq \prod_{l=1}^{L_{i}}(c_{i,L_{i}}\gamma+1)^{a_{i,l}}=(c_{i,L_{i}}\gamma+1)^{(1-a_{i,0})}$, one obtains a looser version of $\mathscr{R}^{(1)}_{i,\mathrm{lb}} (\vec{h}_i)$ given by
\begin{equation}
\label{koodie}
\mathscr{R}^{(2)}_{i,\mathrm{lb}} (\vec{h}_i)\triangleq v_{i}\log\left(\frac{2^{-\mathscr{H}_{i}}2^{\mathscr{G}_{i}}\vert h_{i,i}\vert^{2}\gamma}{v_{i}(c_{i,L_{i}}\gamma+1)^{1-a_{i,0}}}+1\right).
\end{equation}
We note that $\mathscr{R}^{(2)}_{i,\mathrm{lb}} (\vec{h}_i)$ still has the same asymptotic expression as that of $\mathscr{R}_{i,\mathrm{lb}}^{(1)} (\vec{h}_i)$ in the high SNR regime. As we will see later, the computaion complexity of the lower bound on the outage capacity inspired by $\mathscr{R}^{(2)}_{i,\mathrm{lb}} (\vec{h}_i)$  is much lower than that of $\mathscr{R}^{(1)}_{i,\mathrm{lb}} (\vec{h}_i)$.

Let us consider a ``fair'' FH system in which $v_{i}=v$ for some $1 \leq v \leq u$ and for any $ 1\leq i\leq N$. As explained before, $\mathscr{G}_{i}$ depends on the ordering of $\{c_{i,l}\}_{l=0}^{L_{i}}$, which requires analyzing the order statistics of the channel gains. To avoid this, the following Lemma introduces a lower bound on $\mathscr{G}_{i}$ that only depends on $c_{i,L_{i}}$ i.e., the largest interference crossover gain.
\begin{lem}
\label{dool2}
In a fair FH system,
\begin{equation}
\mathscr{H}_{i}=-(N-1)\left(\frac{v}{u}\log\frac{v}{u}+\left(1-\frac{v}{u}\right)\log\left(1-\frac{v}{u}\right)\right)
\end{equation}
and
\begin{equation}
\mathscr{G}_{i}\geq\mathscr{G}_{i,\mathrm{lb}}\triangleq\frac{\mathrm{E}_B\left\{\log\left(1+\left(1-(\frac{v}{u})^{B}\right)c_{i,L_{i}}\gamma\right)\right\}-(N-1)\frac{v}{u}\log \frac{v}{u}}{c_{i,L_{i}}\gamma+1},\end{equation}
where $B$ is a Binomial random variable with parameters $\left(N-1,\frac{v}{u}\right)$.
\end{lem}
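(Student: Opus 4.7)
The plan is to re-index both $\mathscr{H}_{i}$ and the terms inside $\mathscr{G}_{i}$ combinatorially by the subsets $S\subseteq\{1,\ldots,N\}\setminus\{i\}$ of interferers occupying a given sub-band. In a fair FH system each user $k\neq i$ selects its hopping pattern uniformly among the $\binom{u}{v}$ subsets of size $v$, independently of the others, so for any fixed sub-band $j$ the indicator $\epsilon_{k,j}$ is Bernoulli with parameter $\alpha:=v/u$ and these indicators are independent across $k$. Therefore $\Pr\{\text{interferer set}=S\}=p(S)=\alpha^{|S|}(1-\alpha)^{N-1-|S|}$ and the corresponding normalized interference power is $c(S)=\sum_{k\in S}|h_{k,i}|^{2}/v$. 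Because the gains $|h_{k,i}|^{2}$ are continuous and strictly positive almost surely, the $2^{N-1}$ values $\{c(S)\}$ are distinct a.s., making the map $S\mapsto l$ an a.s.\ bijection between subsets and the ordered power levels.

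From this the closed form for $\mathscr{H}_{i}$ follows immediately: since $p(S)$ factors over $N-1$ i.i.d.\ Bernoulli$(\alpha)$ trials, $\mathscr{H}_{i}=-\sum_{S}p(S)\log p(S)$ reduces to $N-1$ times the binary entropy of $\alpha$, which matches the claimed expression.

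For the bound on $\mathscr{G}_{i}$, I would set $c:=c_{i,L_{i}}\gamma$ and rewrite the inner sum in (\ref{ol}) over non-empty subsets $S$. The difficulty is the partial sum $\sum_{m=0}^{l-1}a_{i,m}=\sum_{S':\,c(S')<c(S_{l})}p(S')$, which depends on the unknown ordering of the random crossover gains. The crucial observation is that $S'\subsetneq S$ implies $c(S')<c(S)$ a.s., so $\{S':c(S')<c(S)\}\supseteq\{S':S'\subsetneq S\}$ and hence
\[
\sum_{S':\,c(S')<c(S)}p(S')\;\geq\;\sum_{S'\subsetneq S}p(S').
\]
Monotonicity of $\log(1+\cdot)$ then replaces every summand by a channel-independent quantity depending on $S$ only through $|S|$. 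A short binomial computation gives $\sum_{S'\subsetneq S}p(S')=p(S)(\alpha^{-k}-1)$ for $|S|=k$, whence
\[
1+(c+1)(\alpha^{-k}-1)=\alpha^{-k}\bigl(1+c(1-\alpha^{k})\bigr),
\]
so its logarithm equals $-k\log\alpha+\log(1+c(1-\alpha^{k}))$. The $k=0$ contribution vanishes, so reinstating it, grouping subsets by $|S|=k$ with binomial weight $\binom{N-1}{k}\alpha^{k}(1-\alpha)^{N-1-k}$, and dividing by $c+1$ collapses the sum into $\mathscr{G}_{i,\mathrm{lb}}$, using $-\log\alpha\cdot\mathrm{E}_{B}\{B\}=-(N-1)\alpha\log\alpha$.

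The main obstacle is conceptual rather than computational: the inclusion $\{c(S')<c(S)\}\supseteq\{S'\subsetneq S\}$ is the single non-trivial ingredient, as it decouples the combinatorics from the unknown ordering of the $|h_{k,i}|^{2}$ and converts a channel-dependent ordered partial sum into a clean binomial expectation; the identity for $\sum_{S'\subsetneq S}p(S')$ and the subsequent algebra are routine.
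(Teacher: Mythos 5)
Your proof is correct and follows essentially the same route as the paper's: the closed form for $\mathscr{H}_{i}$ via the product structure of the Bernoulli$(v/u)$ selections, and the key observation that every proper subset of an interferer set yields a strictly smaller power level, hence $\sum_{m=0}^{l-1}a_{i,m}\geq\sum_{S'\subsetneq S}p(S')=(1-\alpha^{k})(1-\alpha)^{N-1-k}$, which is exactly the paper's inequality (\ref{kol}) written in subset notation. The remaining algebra, including the simplification $1+(c+1)(\alpha^{-k}-1)=\alpha^{-k}\bigl(1+c(1-\alpha^{k})\bigr)$ and the regrouping into a binomial expectation, matches the paper's computation.
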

\begin{proof}
Let us define $p\triangleq\frac{v}{u}$. As $L_{i}=2^{N-1}-1$ with probability one and each user selects a certain frequency sub-band with probability $p$. The collection $\{a_{i,l}\}_{l=0}^{L_{i}}$ consists of the numbers $p^{j}(1-p)^{N-1-j}$ repeated ${N-1\choose j}$ times for $0\leq j\leq N-1$.
Hence,
\begin{eqnarray}
\mathscr{H}_{i}&=&-\sum_{j=0}^{N-1}{N-1\choose j}p^{j}(1-p)^{N-1-j}\log\left(p^{j}(1-p)^{N-1-j}\right) \notag\\
&=&-\bigg(\sum_{j=0}^{N-1}j{N-1\choose j}p^{j}(1-p)^{N-1-j}\bigg)\log p \notag\\
&&-\bigg(\sum_{j=0}^{N-1}(N-1-j){N-1\choose j}p^{j}(1-p)^{N-1-j}\bigg)\log(1-p) \notag\\
&=& -(N-1)p\log p-(N-1-(N-1)p)\log(1-p)\notag\\
&=& -(N-1)\left( p \log p +(1-p)\log (1-p) \right).
\end{eqnarray}
As for $\mathscr{G}_{i}$, computation of $\sum_{m=0}^{l-1}a_{i,m}$ is not an easy task. In fact, it depends on the ordering of the crossover gains. For example, if $N=4$,
\begin{equation}
\sum_{m=0}^{3}a_{1,m}=\left\{\begin{array}{ cc}
    (1-p)^3 +2p(1-p)^{2}+p^{2}(1-p)  & \textrm{if $\vert h_{2,1}\vert^{2}<\vert h_{3,1}\vert^{2}<\vert h_{2,1}\vert^{2}+\vert h_{3,1}\vert^{2}<\vert h_{4,1}\vert^{2}$}  \\
   (1-p)^3 + 3p(1-p)^{2}   &  \textrm{if $\vert h_{2,1}\vert^{2}<\vert h_{3,1}\vert^{2}<\vert h_{4,1}\vert^{2}<\vert h_{2,1}\vert^{2}+\vert h_{3,1}\vert^{2}$}\end{array}\right..
\end{equation}
To avoid this difficulty in describing $\mathscr{G}_{i}$, we derive a lower bound on this quantity, which is not sensitive to the ordering of crossover gains. Taking each $a_{i,l}$, there exists a $0\leq s\leq N-1$ such that $a_{i,l}=p^{s}(1-p)^{N-1-s}$. This implies that $a_{i,l}$ corresponds to the interference plus noise power level $\frac{\sum_{j=1}^{s}\vert h_{k_{j},i}\vert^{2}}{v}P+\sigma^{2}$ for some $1\leq k_{1}<\cdots<k_{s}\leq N$ where $k_{j}\neq i$ for $1\leq j\leq s$. Since  $\frac{\sum_{j=1}^{s}\vert h_{k_{j},i}\vert^{2}}{v}P+\sigma^{2}>\frac{\sum_{t\in \mathcal{A}\subsetneq\{1,2,\cdots,s\}}\vert h_{k_{t},i}\vert^{2}}{v}P+\sigma^{2}$ for any set $\mathcal{A}\subsetneq\{1,2,\cdots,s\}$, and $\frac{\sum_{t\in \mathcal{A}\subsetneqq\{1,2,\cdots,s\}}\vert h_{k_{t},i}\vert^{2}}{v}P+\sigma^{2}$ is itself a power level in the PDF of the noise plus interference on each frequency sub-band, we conclude that its associated probability $p^{\vert \mathcal{A}\vert}(1-p)^{N-1-\vert \mathcal{A}\vert}$ is an element in the sequence $(a_{i,0},a_{i,1},\cdots,a_{i,l-1})$. Therefore, we come up with the following lower bound,
\begin{equation}
\label{kol}
\sum_{m=0}^{l-1}a_{i,m}\geq \sum_{\mathcal{A}\subsetneq\{1,2,\cdots,s\}}p^{\vert \mathcal{A}\vert}(1-p)^{N-1-\vert \mathcal{A}\vert}=\sum_{s'=0}^{s-1}{s\choose s'}p^{s'}(1-p)^{N-1-s'}.
\end{equation}
Using (\ref{kol}) in (\ref{ol}) yields
\begin{eqnarray}
\mathscr{G}_{i} &\geq& \frac{1}{c_{i,L_{i}}\gamma+1}\sum_{s=1}^{N-1}{N-1\choose s}p^{s}(1-p)^{N-1-s}\log \left(1+\frac{(c_{i,L_{i}}\gamma+1)\sum_{s'=0}^{s-1}{s\choose s'}p^{s'}(1-p)^{N-1-s'}}{p^{s}(1-p)^{N-1-s}}\right)\notag\\
&=&\frac{1}{c_{i,L_{i}}\gamma+1}\sum_{s=1}^{N-1}{N-1\choose s}p^{s}(1-p)^{N-1-s}\log \left(1+\frac{(c_{i,L_{i}}\gamma+1)\sum_{s'=0}^{s-1}{s\choose s'}p^{s'}(1-p)^{s-s'}}{p^{s}}\right) \notag\\
&=&\frac{1}{c_{i,L_{i}}\gamma+1}\sum_{s=1}^{N-1}{N-1\choose s}p^{s}(1-p)^{N-1-s}\log \left(1+\frac{(1-p^{s})(c_{i,L_{i}}\gamma+1)}{p^{s}}\right) \notag\\
&=& -\frac{1}{c_{i,L_{i}}\gamma+1}\sum_{s=0}^{N-1}{N-1\choose s}sp^{s}(1-p)^{N-1-s}\log p \notag\\
&& +\frac{1}{c_{i,L_{i}}\gamma+1}\sum_{s=0}^{N-1}{N-1\choose s}p^{s}(1-p)^{N-1-s}\log \left(1+(1-p^{s})c_{i,L_{i}}\gamma \right) \notag\\
&=& \frac{\mathrm{E}_B \left\{\log\big(1+(1-p^{B})c_{i,L_{i}}\gamma\big)\right\}-(N-1)p\log p}{c_{i,L_{i}}\gamma+1},
\end{eqnarray}
where $B$ is a Binomial random variable of parameters $(N-1,p)$.
\end{proof}

From now on, we replace $\mathscr{G}_{i}$ with $\mathscr{G}_{i,\mathrm{lb}}$ in all expressions offered for the lower bounds on $\mathscr{R}_{i} (\vec{h}_i)$. In a fair FH system, we denote $a_{i,0}$, $\mathscr{H}_{i}$ and $\mathscr{G}_{i,\mathrm{lb}}$ by $a(v,N)$, $\mathscr{H}(v,N)$ and $\mathscr{G}_{i,\mathrm{lb}}(v,N)$ respectively\footnote{We note that $a_{i,0}$ and $\mathscr{H}_{i}$ do not depend on $i$, however, $\mathscr{G}_{i,\mathrm{lb}}$ depends on $\sum_{j\neq i}|h_{j,i}|^{2}$.}, to emphasize their dependence on the parameters $v,N$.

As a special case of the fair system, let us assume $v_{i}=u$ for all $i$, i.e., all users spread their power on the whole spectrum. This scheme is called full-band spreading (FBS). In this case, it can be observed that $a_{i,l}=0$ for $l\leq L_{i}-1$ and $a_{i,L_{i}}=1$. This yields $a(u,N)=\mathscr{H}(u,N)=\mathscr{G}_{i,\mathrm{lb}}(u,N)=0$. In fact, $\mathscr{R}^{(2)}_{i,\mathrm{lb}}(\vec{h}_{i})$ is tight for $v=u$, i.e., $\mathscr{R}^{(2)}_{i,\mathrm{lb}} (\vec{h}_i)$ is exactly the achievable rate of the $i^{th}$ user while all users transmit over the whole spectrum. We denote this rate by $\mathscr{R}_{i,\mathrm{FBS}} (\vec{h}_i)$, which is given by
\begin{equation}
\label{tg}
\mathscr{R}_{i,\mathrm{FBS}} (\vec{h}_i)=u\log\left(\frac{\vert h_{i,i}\vert^{2}\gamma}{u\left(\frac{c_{i,L_{i}}\gamma}{u}+1\right)}+1\right).
\end{equation}

\textit{Observation 3}- A straightforward method to develop a lower bound on the achievable rate of an additive non-Gaussian noise channel is to replace the noise with a Gaussian noise of the same covariance matrix. Following this approach, it is easy to derive the following lower bound on $\mathscr{R}_{i} (\vec{h}_{i})$,
\begin{equation} \label{Gaussian}
\mathscr{R}_{i,\mathrm{g}}(\vec{h}_{i})\triangleq v\log\left(\frac{\vert h_{i,i}\vert^{2}\gamma}{v\left(\frac{c_{i,L_{i}}\gamma}{u}+1\right)}+1\right)\end{equation}
where the index ``$\mathrm{g}$'' stands for Gaussian. There are two facts that are worth mentioning about $\mathscr{R}_{i,\mathrm{g}}(\vec{h}_{i})$. First, it is seen that $\lim_{\gamma\to\infty}\mathscr{R}_{i,\mathrm{g}}(\vec{h}_{i})<\infty$. Another point is that $\mathscr{R}_{i,\mathrm{g}}(\vec{h}_{i})$ is an increasing function of $v$. However, setting $v=u$ in the expression of $\mathscr{R}_{i,\mathrm{g}}(\vec{h}_{i})$ yields the expression of $\mathscr{R}_{i,\mathrm{FBS}} (\vec{h}_i)$. Therefore, for all realizations of the channel gains and all ranges of $\gamma$,
\begin{equation}
\mathscr{R}_{i,\mathrm{g}}(\vec{h}_{i})\leq\mathscr{R}_{i,\mathrm{FBS}} (\vec{h}_i).\end{equation}
This indicates that using $\mathscr{R}_{i,\mathrm{g}}(\vec{h}_{i})$ as a lower bound on the achievable rate of users in the FH scheme provides no proof of advantage for FH over FBS.

\section{System Design}
  In this section, we aim to find the optimum operation point in the FH scenario in terms of $\epsilon$-outage capacity per user. This requires finding the optimum values of $\{v_i\}_{i=1}^N$. For simplicity of analysis and fairness, we consider the fair system in which $v_i=v$ for any $ 1 \leq i \leq N$. Therefore, the problem is reduced to finding the optimum value of $v$. As mentioned earlier in the system model, we assume the transmitters are not aware of the number of active users as well as the channel gains. Generally, the number of active users in the system is a random variable $N$ with the probability mass function $q_{n}=\Pr\{N=n\}$ for $n\geq 0$. We usually assume $q_{0}=0$ unless otherwise stated.

 Assuming the transmission rate of the $i^{th}$ user is equal to $R$, the outage event for this user is
 \begin{equation}
 \mathcal{O}_{i,\mathrm{FH}} (R)=\left \{ \vec{h}_i: \mathscr{R}_{i,\mathrm{FH}} (\vec{h}_i)<R \right\},
 \end{equation}
where the subscript ``FH'' denotes the underlying scenario\footnote{In the next section, the performance of the FH scheme is compared to that of the FD scheme. We distinguish the parameters of different scenarios  by using the appropriate subscripts.} for which the outage is computed, e.g., Frequency Hopping in this case.
 We notice that the randomness of the number of active users is also considered in the outage event. Therefore, the $\epsilon$-outage capacity\footnote{We are interested in the values of $\epsilon$ in the range $\epsilon<\min\{q_{n}: n\in\mathbb{N}\}$.} of the FH scenario can be expressed as
 \begin{equation}
 \label{heho}
 R_{\mathrm{FH}}(\epsilon)=\sup \left\{R: \Pr\{\mathcal{O}_{i,\mathrm{FH}} (R)\}\leq \epsilon \right\}.
 \end{equation}

The goal of this section is to find $v_{\mathrm{opt}}$ given by
\begin{eqnarray}
 v_{\mathrm{opt}} \triangleq \arg\,\max_{v}\, R_{\mathrm{FH}}(\epsilon).
\end{eqnarray}
We remark that $v_{\mathrm{opt}}$ depends on $\epsilon$, $\gamma$ and $\{q_{n}\}_{n\geq 0}$.

As mentioned in the previous section, the exact expression for $R_{\mathrm{FH}}(\epsilon)$ cannot be derived. This is due to the fact that a closed expression for $ \mathscr{R}_{i,\mathrm{FH}} (\vec{h}_i)$ is intractable. In this part, we derive lower bounds on $R_{\mathrm{FH}}(\epsilon)$ using different lower bounds on  $ \mathscr{R}_{i,\mathrm{FH}} (\vec{h}_i)$ derived in the previous section.

Let $\mathscr{R}_{i,\mathrm{lb}} (\vec{h}_i)$ be a typical lower bound on $\mathscr{R}_{i,\mathrm{FH}} (\vec{h}_i)$ for all realizations of $\vec{h}_i$. It is obvious that $\mathcal{O}_{i,\mathrm{FH}} (R)\subset \{\vec{h}_i: \mathscr{R}_{i,\mathrm{lb}} (\vec{h}_i)<R\}$. This yields $\Pr\{\mathcal{O}_{i,\mathrm{FH}} (R)\}\leq \Pr\big\{\vec{h}_{i}:\mathscr{R}_{i,\mathrm{lb}} (\vec{h}_i)<R\big\}$, and hence,
 \begin{equation}
 \left\{R: \Pr\big\{ \vec{h}_{i}: \mathscr{R}_{i,\mathrm{lb}} (\vec{h}_i)<R\big\}\leq \epsilon\right\}\subset\Big\{R:\Pr\{\mathcal{O}_{i,\mathrm{FH}} (R)\}\leq \epsilon\Big\}.
 \end{equation}
 \label{hen}
 Defining \begin{equation} R_{\mathrm{FH,lb}}(\epsilon)\triangleq\sup\left\{R: \Pr\big\{\vec{h}_{i}: \mathscr{R}_{i,\mathrm{lb}} (\vec{h}_i)<R\}<\epsilon \right\},\end{equation} we get
 \begin{equation}
 R_{\mathrm{FH,lb}}(\epsilon)\leq  R_{\mathrm{FH}}(\epsilon).
  \end{equation}
Based on the preceding discussion, we can derive lower bounds on $R_{\mathrm{FH}}(\epsilon)$, namely $R^{(1)}_{\mathrm{FH,lb}}(\epsilon)$ and $R^{(2)}_{\mathrm{FH,lb}}(\epsilon)$ associated with the lower bounds $\mathscr{R}^{(1)}_{i,\mathrm{lb}} (\vec{h}_i)$ and $\mathscr{R}^{(2)}_{i,\mathrm{lb}} (\vec{h}_i)$ respectively. Consequently, we can obtain estimates of $v_{\mathrm{opt}}$ by maximizing $R^{(1)}_{\mathrm{FH,lb}}(\epsilon)$ or $R^{(2)}_{\mathrm{FH,lb}}(\epsilon)$ over $v$. In the following subsections, we separately compute $R^{(1)}_{\mathrm{FH,lb}}(\epsilon)$ and $R^{(2)}_{\mathrm{FH,lb}}(\epsilon)$.

  \textit{1- Computation of $R^{(1)}_{\mathrm{FH,lb}}(\epsilon)$}

  We start with the following definitions.
  \begin{definition}
  Let $n\in \mathbb{N}$. For $c\in[0,1]$ and $b>0$, $\alpha_{n}(.;b,c):\mathbb{R}^{+}\rightarrow\mathbb{R}^{+}$ is defined by
  \begin{equation}
  \alpha_{n}(\theta;b,c)\triangleq\frac{\mathrm{E}_B\{\log\big(1+b(1-c^{B})\theta\big)\}-(n-1)c\log c}{b\theta+1} , \end{equation}
  where $B$ is a Binomial random variable with parameters $(n-1,c)$.
  \end{definition}
  \begin{definition}
  Let $n\in \mathbb{N}^{\geq 2}$. For $b_{1}<0$, $b_{2}>0$ and $c\in[0,1]$, we define
    \begin{equation}
        \psi_{n}(b_{1},b_{2},c)\triangleq\int_{\theta_{1}\geq0,\cdots,\theta_{n-1}\geq0}\exp\bigg(b_{1}2^{-\alpha_{n}(\theta_{n-1,1};b_{2},c)}\prod_{m=1}^{n-1}\prod_{m'=1}^{{n-1\choose m}}(b_{2}\theta_{m,m'}+1)^{\beta_{m,n}(c)}-\theta_{n-1,1}\bigg)d\theta_{1}\cdots d\theta_{n-1} ,      \end{equation}
        where for each $m$, $\{\theta_{m,m'}\}_{m'=1}^{{n-1\choose m}}$ consists of all possible summations of $m$ elements in the set of dummies $\{\theta_{i}\}_{i=1}^{n-1}$ and $\beta_{m,n}(c)\triangleq c^{m}(1-c)^{n-1-m}$.
  \end{definition}
  For example,
  \begin{equation}
  \psi_{2}(b_{1},b_{2},c)=\int_{0}^{\infty}\exp\bigg(b_{1}2^{-\alpha_{2}(\theta;b_{2},c)}(b_{2}\theta+1)^{c}-\theta\bigg)d\theta,       \end{equation}
  and
  \begin{equation}
   \psi_{3}(b_{1},b_{2},c)=\int_{\theta_{1},\theta_{2}>0}\exp\bigg(b_{1}2^{-\alpha_{3}(\theta_{2,1};b_{2},c)}\big((b_{2}\theta_{1}+1)(b_{2}\theta_{2}+1)\big)^{c(1-c)}(b_{2}\theta_{2,1}+1)^{c^{2}}-\theta_{2,1}\bigg)d\theta_{1}d\theta_{2} ,      \end{equation}
   where $\theta_{2,1}=\theta_{1}+\theta_{2}$ by definition.

  The following Proposition offers an expression to compute $R^{(1)}_{\mathrm{FH,lb}}(\epsilon)$.
   \begin{proposition}
   \begin{equation}
   \label{km}
   R_{\mathrm{FH,lb}}^{(1)}(\epsilon)=\sup\left\{ R:q_{1}\exp\left(\frac{\left(1-2^{\frac{R}{v}}\right)v}{\gamma}\right)+\sum_{n=2}^{\infty}q_{n}\psi_{n}\big(b_{1,n},b_{2},c\big)>1-\epsilon\right\}   \end{equation}
    where $b_{1,n}=\frac{2^{\mathscr{H}(v,n)}\left(1-2^{\frac{R}{v}}\right)v}{\gamma}$, $b_{2}=\frac{\gamma}{v}$ and $c=\frac{v}{u}$.
    \end{proposition}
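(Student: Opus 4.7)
The plan is to compute $\Pr\{\mathscr{R}^{(1)}_{i,\mathrm{lb}}(\vec{h}_i)\geq R\}$ in closed form by conditioning first on the number of active users and then on the crossover gains, and finally invert the relation
$$R_{\mathrm{FH,lb}}^{(1)}(\epsilon)=\sup\left\{R:\Pr\{\mathscr{R}^{(1)}_{i,\mathrm{lb}}(\vec{h}_i)\geq R\}>1-\epsilon\right\}.$$
I would begin by partitioning on $N=n$ with weights $q_n$. The case $n=1$ is degenerate: $L_i=0$ so $\mathscr{H}(v,1)=\mathscr{G}_{i,\mathrm{lb}}(v,1)=0$ and the product over $l$ is empty. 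The inequality $\mathscr{R}^{(1)}_{i,\mathrm{lb}}(\vec h_i)\geq R$ collapses to $|h_{i,i}|^{2}\geq\frac{(2^{R/v}-1)v}{\gamma}$, and the exponential distribution of $|h_{i,i}|^{2}$ immediately yields the first summand $q_{1}\exp\!\bigl(\frac{(1-2^{R/v})v}{\gamma}\bigr)$.

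For $n\geq 2$ the work is to recognize that each quantity appearing in $\mathscr{R}^{(1)}_{i,\mathrm{lb}}$ matches a piece of $\psi_{n}$. Writing $\theta_{j}=|h_{j,i}|^{2}$ for $j\neq i$, a subset $\mathcal{A}\subseteq\{1,\dots,n-1\}$ of interferers produces the power level $c_{i,l}\gamma=\frac{\gamma}{v}\sum_{k\in\mathcal{A}}\theta_{k}$ with probability $a_{i,l}=p^{|\mathcal{A}|}(1-p)^{n-1-|\mathcal{A}|}=\beta_{|\mathcal{A}|,n}(c)$, where $p=c=v/u$. Enumerating $\mathcal{A}$ by cardinality $m$ identifies the index $l$ with a pair $(m,m')$ and gives
$$\prod_{l=1}^{L_{i}}(c_{i,l}\gamma+1)^{a_{i,l}}=\prod_{m=1}^{n-1}\prod_{m'=1}^{\binom{n-1}{m}}(b_{2}\theta_{m,m'}+1)^{\beta_{m,n}(c)},$$
with $b_{2}=\gamma/v$. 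In particular $c_{i,L_{i}}\gamma=b_{2}\theta_{n-1,1}$ (the unique sum over all $n-1$ interferers), so by Lemma~\ref{dool2} the replacement value $\mathscr{G}_{i,\mathrm{lb}}$ equals $\alpha_{n}(\theta_{n-1,1};b_{2},c)$.

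With these identifications in hand, the inequality $\mathscr{R}^{(1)}_{i,\mathrm{lb}}(\vec h_i)\geq R$ becomes a threshold condition on $|h_{i,i}|^{2}$, and conditioning on $(\theta_{1},\dots,\theta_{n-1})$ while exploiting independence and exponentiality of $|h_{i,i}|^{2}$ yields
$$\Pr\bigl\{\mathscr{R}^{(1)}_{i,\mathrm{lb}}\!\geq\!R\,\big|\,\theta_{1},\dots,\theta_{n-1},N=n\bigr\}=\exp\!\Bigl(b_{1,n}\,2^{-\alpha_{n}(\theta_{n-1,1};b_{2},c)}\prod_{m,m'}(b_{2}\theta_{m,m'}+1)^{\beta_{m,n}(c)}\Bigr),$$
since the coefficient $-\frac{(2^{R/v}-1)v\,2^{\mathscr H(v,n)}}{\gamma}$ is exactly $b_{1,n}$. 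Averaging against the joint density $e^{-(\theta_{1}+\cdots+\theta_{n-1})}=e^{-\theta_{n-1,1}}$ reproduces the definition of $\psi_{n}(b_{1,n},b_{2},c)$ verbatim; summing over $n$ against $q_{n}$ and rewriting $\Pr\{\mathscr{R}^{(1)}_{i,\mathrm{lb}}<R\}=1-\Pr\{\mathscr{R}^{(1)}_{i,\mathrm{lb}}\geq R\}$ completes the argument.

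The main obstacle is the combinatorial bookkeeping in the middle step: one must verify that the bijection between nonempty subsets $\mathcal{A}$ of interferers and pairs $(m,m')$ is simultaneously consistent for the exponents $a_{i,l}\leftrightarrow\beta_{m,n}(c)$, for the power levels $c_{i,l}\gamma\leftrightarrow b_{2}\theta_{m,m'}$, and for the identification $c_{i,L_{i}}\gamma=b_{2}\theta_{n-1,1}$ that makes $\mathscr{G}_{i,\mathrm{lb}}$ coincide with $\alpha_{n}$. Once this correspondence is set up cleanly, the remainder is mechanical manipulation of exponentials and an application of Fubini to exchange the integrals with the expectation over $|h_{i,i}|^{2}$.
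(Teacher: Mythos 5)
Your proposal is correct and follows essentially the same route as the paper's Appendix B: condition on $N=n$, identify the power levels and probabilities with the sums $\theta_{m,m'}$ and the weights $\beta_{m,n}(c)$, exploit the exponential law of $\vert h_{i,i}\vert^{2}$ conditionally on the crossover gains, and recognize the resulting average over the i.i.d.\ exponential interference gains as $\psi_{n}(b_{1,n},b_{2},c)$. The combinatorial correspondence you flag as the main obstacle is exactly the identification the paper makes when it writes $\prod_{l}(c_{i,l}\gamma+1)^{a_{i,l}}=\prod_{m}\prod_{m'}(\tfrac{\gamma}{v}\mathcal{I}_{m,m'}+1)^{\beta_{m,n}(v/u)}$ and substitutes $\mathscr{G}_{i,\mathrm{lb}}=\alpha_{n}(\mathcal{I}_{n-1,1};\tfrac{\gamma}{v},\tfrac{v}{u})$, so no new idea is needed.
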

  \begin{proof}
  See appendix B.
  \end{proof}
  The expression given in (\ref{km}) is quite complicated. On one hand, the multiple integrals do not have a closed form. On the other hand, the maximization $\max_{v}R_{\mathrm{FH,lb}}^{(1)}(\epsilon)$ must be computed numerically. However, $R^{(1)}_{\mathrm{FH,lb}}(\epsilon)$ is the best lower bound on $R_{\mathrm{FH}}(\epsilon)$ as $\mathscr{R}^{(1)}_{i,\mathrm{lb}}(\vec{h}_{i})$ is the best lower bound we have found on the achievable rate of the $i^{th}$ user in the FH scenario.

 \textit{2- Computation of $R^{(2)}_{\mathrm{FH,lb}}(\epsilon)$}

 We start with the following definition.
 \begin{definition}
 Let $n\in\mathbb{N}\backslash\{1\}$. For $b_{1}<0$, $b_{2}>0$ and $c_{1},c_{2}\in [0,1]$, we define the function $\phi_{n}(b_{1},b_{2},c_{1},c_{2})$ as
  \begin{equation}
  \label{lk}
  \phi_{n}(b_{1},b_{2},c_{1},c_{2})\triangleq\frac{1}{(n-2)!}\int_{0}^{\infty}\theta^{n-2}\exp\big(b_{1}(b_{2}\theta+1)^{c_{1}}2^{-\alpha_{n}(\theta;b_{2},c_{2})}-\theta\big)d\theta.
  \end{equation}
  \end{definition}
  Using this class of functions, the following proposition yields $R^{(2)}_{\mathrm{FH,lb}}(\epsilon)$.
  \begin{proposition}
  \begin{equation}
  \label{pqqq}
  R^{(2)}_{\mathrm{FH,lb}}(\epsilon)=\sup \left\{R: q_{1}\exp\left(\frac{\left(1-2^{\frac{R}{v}}\right)v}{\gamma}\right)+\sum_{n=2}^{\infty}q_{n}\phi_{n}\big(b_{1,n},b_{2},c_{1,n},c_{2}\big)>1-\epsilon\right\},  \end{equation}
  where $b_{1,n}=\frac{2^{\mathscr{H}(v,n)}\left(1-2^{\frac{R}{v}}\right)v}{\gamma}$, $b_{2}=\frac{\gamma}{v}$, $c_{1,n}=1-a(v,n)$ and $c_{2}=\frac{v}{u}$.
  \end{proposition}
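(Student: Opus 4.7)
The plan is to mirror the derivation of Proposition 1 while exploiting the simpler structure of $\mathscr{R}^{(2)}_{i,\mathrm{lb}}(\vec{h}_{i})$: here the interference enters only through the largest crossover gain $c_{i,L_i}$ rather than through every partial sum $\sum_{m<l} a_{i,m}$. First I would rewrite the event $\{\mathscr{R}^{(2)}_{i,\mathrm{lb}}(\vec{h}_{i})<R\}$ as an explicit upper threshold on $|h_{i,i}|^2$: dividing by $v$ inside the logarithm in (\ref{koodie}) and solving yields
\[
|h_{i,i}|^2 < \frac{v(2^{R/v}-1)(c_{i,L_i}\gamma+1)^{1-a_{i,0}}\, 2^{\mathscr{H}_i}}{\gamma\, 2^{\mathscr{G}_{i,\mathrm{lb}}}}.
\]
Since $|h_{i,i}|^2$ is unit-mean exponential and independent of the interferer gains $\{h_{k,i}\}_{k\neq i}$, the conditional no-outage probability given $N=n$ and $\{h_{k,i}\}_{k\neq i}$ is immediately $\exp\!\left(b_{1,n}(c_{i,L_i}\gamma+1)^{c_{1,n}} 2^{-\mathscr{G}_{i,\mathrm{lb}}}\right)$ with the parameters $b_{1,n}$ and $c_{1,n}$ matching the statement.

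Next I would collapse the remaining randomness into a single scalar by setting $\Theta\triangleq v\, c_{i,L_i}=\sum_{k\neq i}|h_{k,i}|^2$. This reduction is the whole payoff of working with $\mathscr{R}^{(2)}_{i,\mathrm{lb}}$ rather than $\mathscr{R}^{(1)}_{i,\mathrm{lb}}$: the conditional no-outage probability depends on the interferer gains only through $\Theta$, so no joint treatment of several partial sums (which is what forces the $(n-1)$-fold integral in Proposition 1's $\psi_n$) is required. Under the substitutions $b_2=\gamma/v$ and $c_2=v/u$, one verifies $c_{i,L_i}\gamma+1=b_2\Theta+1$ and, from Lemma \ref{dool2}, $\mathscr{G}_{i,\mathrm{lb}}=\alpha_n(\Theta;b_2,c_2)$, so the conditional no-outage probability equals $\exp\!\left(b_{1,n}(b_2\Theta+1)^{c_{1,n}} 2^{-\alpha_n(\Theta;b_2,c_2)}\right)$.

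The third step is to marginalize over $\Theta$. Because the $|h_{k,i}|^2$ are i.i.d.\ unit-mean exponentials, $\Theta$ has the Gamma$(n-1,1)$ density $\theta^{n-2}e^{-\theta}/(n-2)!$ for $n\geq 2$, and integrating against this density reproduces the definition (\ref{lk}) of $\phi_n(b_{1,n},b_2,c_{1,n},c_2)$. The case $n=1$ has no interferers, and the bound collapses to $v\log(1+|h_{i,i}|^2\gamma/v)$, whose no-outage probability is $\exp((1-2^{R/v})v/\gamma)$, explaining the first term of (\ref{pqqq}). Weighting these conditional no-outage probabilities by $q_n$ and summing gives $1-\Pr\{\mathscr{R}^{(2)}_{i,\mathrm{lb}}(\vec{h}_i)<R\}$; requiring this to exceed $1-\epsilon$ and taking the supremum over $R$ then produces (\ref{pqqq}).

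The only nontrivial step is the bookkeeping around the identification $\mathscr{G}_{i,\mathrm{lb}}=\alpha_n(\Theta;b_2,c_2)$: one has to check that the substitution $\Theta=v c_{i,L_i}$, $b_2=\gamma/v$, $c_2=v/u$ simultaneously aligns the denominator $c_{i,L_i}\gamma+1$ with $b_2\Theta+1$ and the binomial expectation in $\mathscr{G}_{i,\mathrm{lb}}$ with the one inside $\alpha_n$ using parameters $(n-1,v/u)$ inherited from Lemma \ref{dool2}. A minor but necessary care item is that the prefactor $1/(n-2)!$ in (\ref{lk}) is ill-defined at $n=1$, which is precisely why the proposition separates out the $q_1$ term rather than folding it into the sum.
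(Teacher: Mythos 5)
Your proposal is correct and follows essentially the same route as the paper's Appendix C: condition on $N=n$, exploit the exponential distribution of $|h_{i,i}|^2$ to get the conditional no-outage probability, identify the single interference statistic $\Theta = v c_{i,L_i}$ with the paper's $\mathcal{J}_n \sim \mathrm{Gamma}(n-1,1)$, and marginalize to recover $\phi_n$. The bookkeeping you flag (that $b_2\Theta+1 = c_{i,L_i}\gamma+1$ and $\mathscr{G}_{i,\mathrm{lb}} = \alpha_n(\Theta;b_2,c_2)$ under the stated substitutions, and the separate handling of $n=1$) matches the paper exactly.
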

  \begin{proof}
  See appendix C.
  \end{proof}
  Comparing the expressions for $R^{(1)}_{\mathrm{FH,lb}}(\epsilon)$ and $R^{(2)}_{\mathrm{FH,lb}}(\epsilon)$, it can be observed that computation of $R^{(2)}_{\mathrm{FH,lb}}(\epsilon)$ involves only one integration, while the computation of $R^{(1)}_{\mathrm{FH,lb}}(\epsilon)$ involves multiple integrations that are not tractable for many cases. To further reduce the complexity of computation, the following Corollary, proved in appendix D, yields another lower bound on $R_{\mathrm{FH}}(\epsilon)$ that involves no numerical integrations. We denote this lower bound by $R^{(3)}_{\mathrm{FH,lb}}(\epsilon)$.
  \begin{corollary}
  Let
    \begin{equation} \label{kuyyywh0}
 R^{(3)}_{\mathrm{FH,lb}}(\epsilon)\triangleq\sup\left\{R: \sum_{n=1}^{\infty}q_n \exp\left(b_{1,n}\big((n-1)b_{2}+1\big)^{1-a(v,n)}\right)>1-\epsilon\right\}.\end{equation}
 Then,
  \begin{equation}R_{\mathrm{FH}}(\epsilon)\geq R^{(1)}_{\mathrm{FH,lb}}(\epsilon)\geq R^{(2)}_{\mathrm{FH,lb}}(\epsilon)\geq R^{(3)}_{\mathrm{FH,lb}}(\epsilon),\end{equation}
 where $b_{1,n}=\frac{2^{\mathscr{H}(v,n)}\left(1-2^{\frac{R}{v}}\right)v}{\gamma}$ and $b_{2}=\frac{\gamma}{v}$.
  \end{corollary}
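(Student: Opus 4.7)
The plan is to establish the three inequalities in order. The first two, $R_{\mathrm{FH}}(\epsilon)\geq R^{(1)}_{\mathrm{FH,lb}}(\epsilon)\geq R^{(2)}_{\mathrm{FH,lb}}(\epsilon)$, are free given what the excerpt has already set up: Theorem~\ref{kuyyy1} provides the pointwise bound $\mathscr{R}_i(\vec{h}_i)\geq \mathscr{R}^{(1)}_{i,\mathrm{lb}}(\vec{h}_i)$, and Observation~2 (by replacing each factor $(c_{i,l}\gamma+1)^{a_{i,l}}$ in the denominator with $(c_{i,L_i}\gamma+1)^{a_{i,l}}$) gives $\mathscr{R}^{(1)}_{i,\mathrm{lb}}(\vec{h}_i)\geq \mathscr{R}^{(2)}_{i,\mathrm{lb}}(\vec{h}_i)$. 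Using the general monotonicity argument presented just before the Propositions, pointwise domination of achievable-rate lower bounds transfers directly to the $\sup$ definition of $R_{\mathrm{FH,lb}}$ (larger rate bound $\Rightarrow$ smaller outage set $\Rightarrow$ smaller outage probability $\Rightarrow$ larger admissible rate), so these two inequalities reduce to quoting what is already proved.

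The substantive content is the last inequality, $R^{(2)}_{\mathrm{FH,lb}}(\epsilon)\geq R^{(3)}_{\mathrm{FH,lb}}(\epsilon)$. Since both quantities are defined as a $\sup$ over $R$ of a tail condition involving a sum over $n$ with the same weights $q_n$, it suffices to prove, term by term, that
\begin{equation}
\phi_n\bigl(b_{1,n},b_2,c_{1,n},c_2\bigr)\;\geq\;\exp\!\Bigl(b_{1,n}\bigl((n-1)b_2+1\bigr)^{1-a(v,n)}\Bigr)\qquad (n\geq 2),
\end{equation}
with the $n=1$ term matching exactly because $\mathscr{H}(v,1)=0$ and $a(v,1)=1$ force both expressions to collapse to $q_1\exp\bigl((1-2^{R/v})v/\gamma\bigr)$.

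To prove the term-by-term bound I would first reinterpret the integral in (\ref{lk}) probabilistically: $\theta^{n-2}e^{-\theta}/(n-2)!$ is the density of a Gamma$(n-1,1)$ variable $\Theta$ with mean $n-1$, so
\begin{equation}
\phi_n(b_{1,n},b_2,c_{1,n},c_2)=\mathrm{E}_\Theta\!\left[\exp\!\bigl(b_{1,n}(b_2\Theta+1)^{c_{1,n}}\,2^{-\alpha_n(\Theta;b_2,c_2)}\bigr)\right].
\end{equation}
Now I would strip off the $2^{-\alpha_n}$ factor: since $\alpha_n\geq 0$ we have $2^{-\alpha_n(\Theta)}\leq 1$, and since $b_{1,n}<0$ (from $R>0$), multiplying by a factor in $(0,1]$ makes $b_{1,n}(b_2\Theta+1)^{c_{1,n}}2^{-\alpha_n(\Theta)}\geq b_{1,n}(b_2\Theta+1)^{c_{1,n}}$. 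Exponentiating and taking expectation yields $\phi_n\geq \mathrm{E}_\Theta[\exp(b_{1,n}(b_2\Theta+1)^{c_{1,n}})]$.

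Finally I would apply Jensen twice in the opposite directions they naturally want to go, which is the step where one has to be careful. First, $x\mapsto e^{b_{1,n}x}$ is convex, so $\mathrm{E}_\Theta[\exp(b_{1,n}(b_2\Theta+1)^{c_{1,n}})]\geq \exp\!\bigl(b_{1,n}\,\mathrm{E}_\Theta[(b_2\Theta+1)^{c_{1,n}}]\bigr)$. Second, since $c_{1,n}=1-a(v,n)\in[0,1]$, the map $x\mapsto x^{c_{1,n}}$ is concave and Jensen gives $\mathrm{E}_\Theta[(b_2\Theta+1)^{c_{1,n}}]\leq (b_2(n-1)+1)^{c_{1,n}}$; because $b_{1,n}<0$, this upper bound on the interior turns into the desired lower bound on the exponential. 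Assembling these two Jensen steps with the monotonicity reduction already obtained yields the claimed term-by-term inequality, and hence the third inequality of the corollary. The only delicate point in the argument is tracking the sign flips from $b_{1,n}<0$ that reverse each Jensen inequality into the direction we need; once those signs are booked correctly, everything composes cleanly.
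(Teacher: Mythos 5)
Your proposal is correct and follows essentially the same route as the paper's Appendix D: drop the $2^{-\alpha_n}\leq 1$ factor using $b_{1,n}<0$, recognize the Gamma density with mean $n-1$, and apply Jensen. The only cosmetic difference is that the paper applies Jensen once after verifying that $\theta\mapsto\exp\bigl(b_{1,n}(b_2\theta+1)^{c_{1,n}}\bigr)$ is itself convex (convex increasing $\exp$ composed with the convex map $\theta\mapsto b_{1,n}(b_2\theta+1)^{c_{1,n}}$), whereas you split this into two Jensen steps whose sign flips you track correctly — both yield the identical bound.
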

  \begin{proof}
  See appendix D.
    \end{proof}
       Fig. 1 shows the three lower bounds $\max_{v}R_{\mathrm{FH,lb}}^{(k)}(\epsilon)$ for $1\leq k\leq3$ on $\max_{v}R_{\mathrm{FH}}(\epsilon)$ in a system where, at most, four users become active simultaneously with $(q_{1},q_{2},q_{3},q_{4})=(0.4,0.2,0.2,0.2)$, $u=8$ and $\gamma=100$. As can be observed from this figure,  $\max_{v}R_{\mathrm{FH,lb}}^{(2)}(\epsilon)$ and $\max_{v}R_{\mathrm{FH,lb}}^{(3)}(\epsilon)$ are pretty close to each other while they are within a considerable gap to $\max_{v}R_{\mathrm{FH,lb}}^{(1)}(\epsilon)$, especially for larger values of $\epsilon$. It is notable that the maximization over $v$ is performed separately for each $\epsilon$.
   \begin{figure}[h!b!t]
  \centering
  \includegraphics[scale=.6] {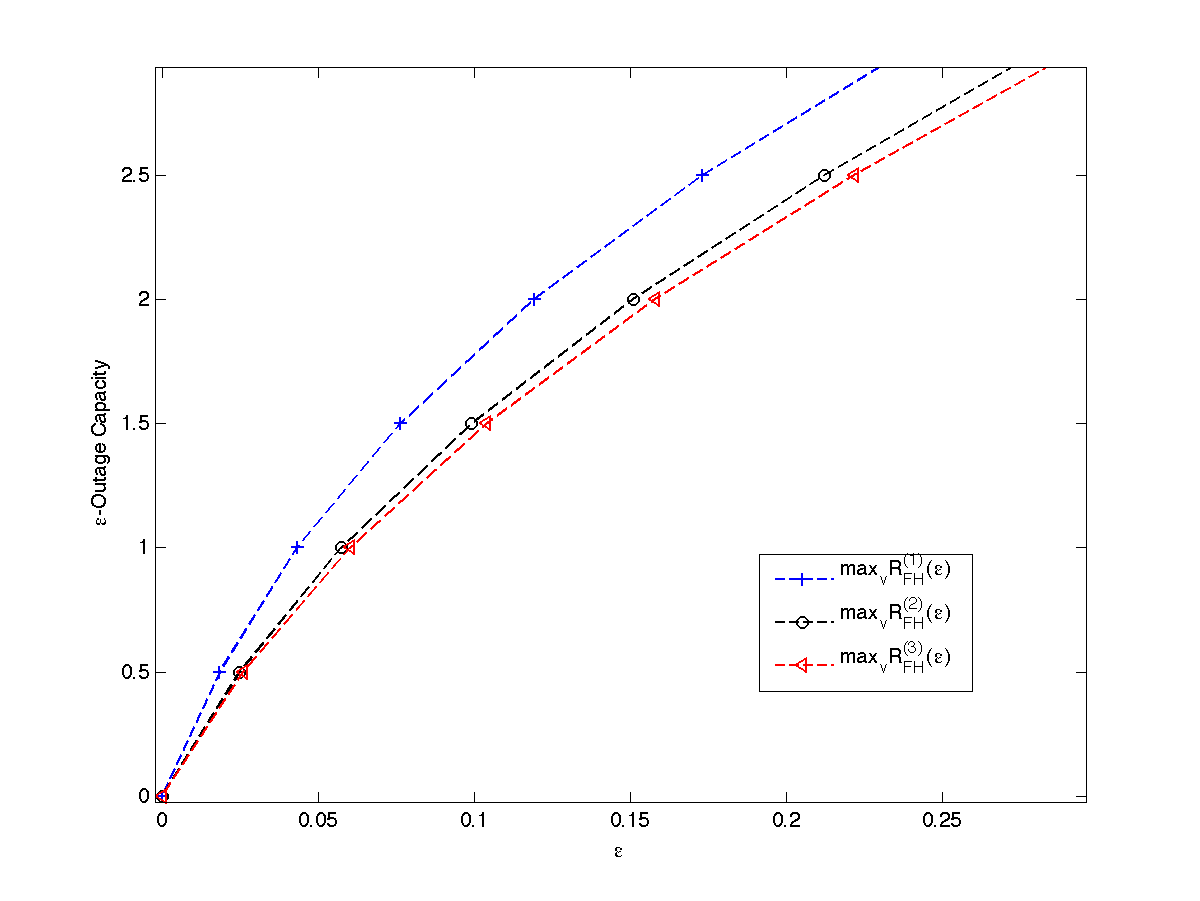}
  \caption{Depictions of $\max_{v}R^{(k)}_{\mathrm{FH,lb}}(\epsilon)$ for $1\leq k\leq3$ in a setup where $(q_{1},q_{2},q_{3},q_{4})=(0.4,0.2,0.2,0.2)$, $u=8$ and $\gamma=100$.}
  \label{figapp1}
 \end{figure}

Having the expression for the $\epsilon$-outage capacity, we can find the best operational point of the system in terms of $v$, the number of selected sub-bands. For this purpose, we consider some asymptotic cases in terms of $\epsilon$ and $\gamma$ and discuss the optimum value of $v$ in these regimes. As the expression of $R_{\mathrm{FH,lb}}^{(1)}(\epsilon)$ is not analytically tractable, we use $R_{\mathrm{FH,lb}}^{(3)}(\epsilon)$ for our analysis\footnote{Note that we can also use $R_{\mathrm{FH,lb}}^{(2)}(\epsilon)$ since it involves only one integration, however, as $R_{\mathrm{FH,lb}}^{(3)}(\epsilon)$ is pretty close to $R_{\mathrm{FH,lb}}^{(2)}(\epsilon)$, we use $R_{\mathrm{FH,lb}}^{(3)}(\epsilon)$ for simplicity of analysis.}. For simulation purposes, we use the most complex lower bound $R_{\mathrm{FH,lb}}^{(1)}(\epsilon)$, which is the best bound as well.
\subsection{Asymptotically small $\epsilon$}

In this case, one can easily show that  $\lim_{\epsilon\to0}R^{(3)}_{\mathrm{FH,lb}}(\epsilon) = 0$. Therefore, in (\ref{kuyyywh0}), we can approximate $\frac{\left(1-2^{\frac{R}{v}}\right)v}{\gamma}$ by $-\frac{R \ln 2}{\gamma}$ and hence,
\begin{eqnarray} \label{kuyyywh1}
\exp\left(\frac{\left(1-2^{\frac{R}{v}}\right)v}{\gamma}\right) &\approx& 1- \frac{R \ln (2)}{\gamma}.
\end{eqnarray}
By the same token, the term on the right-hand side of (\ref{kuyyywh0}) can be approximated as
\begin{eqnarray} \label{kuyyywh2}
 \sum_{n=1}^{\infty}q_n \exp\left(b_{1,n}\big((n-1)b_{2}+1\big)^{1-a(v,n)}\right) &\approx&  \sum_{n=1}^{\infty}q_n \exp\left( - \frac{R \ln 2}{\gamma} 2^{\mathscr{H}(v,n)} \left(\frac{n-1}{v}\gamma+1\right)^{1-a(v,n)}\right) \notag\\
&\stackrel{(a)}{\approx}& \sum_{n=1}^{\infty}q_n \left( 1- \frac{R \ln 2}{\gamma} f(v,n,\gamma) \right),
\end{eqnarray}
where $(a)$ follows from the fact that the term
\begin{eqnarray} \label{kuyyywh3}
f(v,n,\gamma) \triangleq 2^{\mathscr{H}(v,n)} \left(\frac{n-1}{v}\gamma+1\right)^{1-a(v,n)}
\end{eqnarray}
 does not depend on $\epsilon$. Using (\ref{kuyyywh2}) in (\ref{kuyyywh0}) yields
\begin{eqnarray} \label{kuyyywh7}
R^{(3)}_{\mathrm{FH,lb}}(\epsilon) \approx \frac{\epsilon \gamma}{ \Big( \sum_{n=1}^{\infty} q_n f(v,n,\gamma) \Big)\ln2}.
\end{eqnarray}
One can observe that the function $f(v,n,\gamma)$ is concave in terms of $v$ for all $n\geq 1$. Hence, the function $g(v)\triangleq\sum_{n=1}^{\infty} q_nf(v,n,\gamma)$ is concave as well. As such, the minimum of $g(v)$ occurs either at $v=1$ or $v=u$. In partiular, for $v=1$ to be the optimum value, we must have $g(1)<g(u)$ or equivalently,
\begin{equation}
\label{hnmj}
 \sum_{n=1}^{\infty}q_n \left(\frac{u}{(u-1)^{1-\frac{1}{u}}}\right)^{n-1}\big((n-1)\gamma+1\big)^{1-\left(1-\frac{1}{u}\right)^{n-1}}<\sum_{n=1}^{\infty}q_n\left(1+\frac{(n-1)\gamma}{u}\right)=1+\frac{\mathrm{E}\{N\}-1}{u}\gamma ,
\end{equation}
where we have used the fact that $2^{\mathscr{H}\left(1,n\right)}=\left(\frac{u}{(u-1)^{1-\frac{1}{u}}}\right)^{n-1}$.

\textit{Example 1}- Assume $q_{1}=q_{2}=0.5$. The condition (\ref{hnmj}) can be written as
\begin{equation}
\label{hjkl}
\frac{u^{u}}{(u-1)^{u-1}}<\frac{1}{1+\gamma}\left(1+\frac{\gamma}{u}\right)^{u}.
\end{equation}
For each $u$, there is a smallest number $\gamma_{u}>0$ such that if $\gamma>\gamma_{u}$, then (\ref{hjkl}) is satisfied. For example, if $u=2$, then $\gamma_{2}=2\left(3+2\sqrt{3}\right)$. It is easily seen through simulations that $\gamma_{10}<\gamma_{9}<\cdots<\gamma_{3}<\gamma_{2}$. However, the sequence $\{\gamma_{u}\}_{u=1}^{\infty}$ is not a decreasing sequence. In fact, one observes that the right-hand side of (\ref{hjkl}) tends to the increasing function $\frac{e^{\gamma}}{1+\gamma}$ in terms of $\gamma$ as $u$ increases. On the other hand, the left-hand side of the same equation, i.e., the term $\frac{u^{u}}{(u-1)^{u-1}}$, tends to $eu$ as $u$ increases. Therefore, in the case that both $\gamma$ and $u$ tend to infinity, the condition (\ref{hjkl}) reduces to $eu < \frac{e^{\gamma}}{1+\gamma}$, which implies that $\gamma_{u} \approx \ln u + \ln \ln u +1$. This is an increasing function in terms of $u$.
$\rightmark{\square}$

Fig. 2 offers the curves of $R^{(1)}_{\mathrm{FH,lb}}(\epsilon)$ in terms of $v$ for $\epsilon=0.01,0.05$ and $0.15$. The underlying network is characterized with $q_{1}=q_{2}=0.5$, $u=10$ and $\gamma=20\mathrm{dB}$. It is seen that for $\epsilon=0.01$, taking $v=1$ yields the best performance. As already stated in example 1, as far as $q_{1}=q_{2}=0.5$, for sufficiently small $\epsilon$, the equation $\frac{1}{1+\gamma_{10}}\left(1+\frac{\gamma_{10}}{10}\right)^{10}=\frac{10^{10}}{9^{9}}=25.8117$ yields $\gamma_{10}=7.052$ as the minimum SNR value such that choosing $\gamma>\gamma_{10}$ results in $v=1$ as the best choice. Since $\gamma_{10} < 20 \mathrm{dB}$, we expect $v_{\mathrm{opt}}=1$, which is confirmed in the plot. However, as the value of $\epsilon$ increases, we are moving away from the \textit{asymptotically small $\epsilon$} region and $v=1$ is no longer an optimal choice.
 \begin{figure}[h!b!t]
  \centering
  \includegraphics[scale=.6] {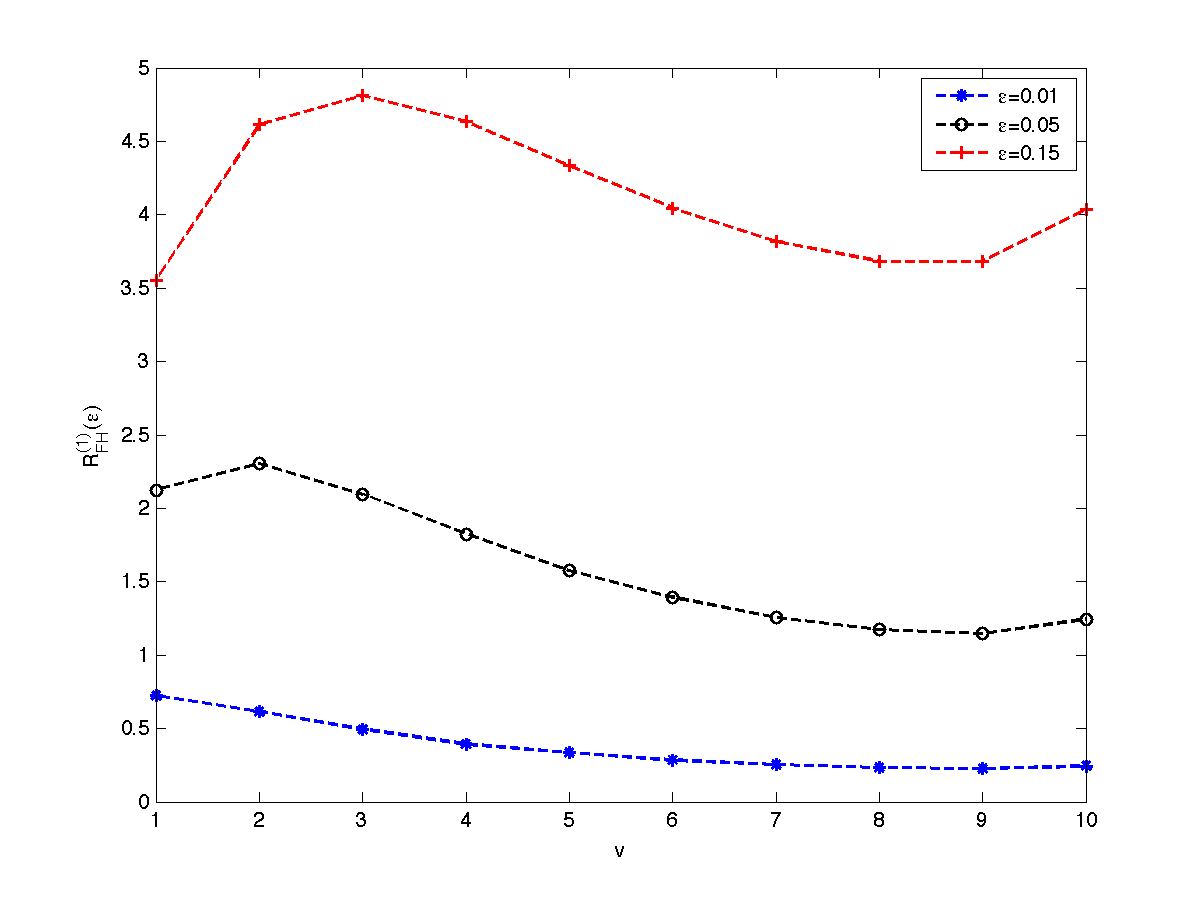}
  \caption{Sketch of $R_{\mathrm{FH,lb}}^{(1)}(\epsilon)$  for different values of $\epsilon$ in a network with $(q_{1},q_{2})=(0.5,0.5)$, $u=10$ and $\gamma=20\mathrm{dB}$.  }
  \label{figapp1}
 \end{figure}
\subsection{Asymptotically small $\gamma$}
In this case, one can easily show that $\lim_{\gamma \to 0} R^{(3)}_{\mathrm{FH,lb}}(\epsilon)=0$. Therefore, similar to the previous case, we can use the approximation
\begin{eqnarray}
\frac{\left(1-2^{\frac{R}{v}}\right) v}{\gamma} \approx -\frac{R \ln 2}{\gamma}.
\end{eqnarray}
Defining $\tau \triangleq \exp \left( -\frac{R \ln 2}{\gamma} \right) $, one can rewrite  (\ref{kuyyywh0}) as
\begin{eqnarray}
R^{(3)}_{\mathrm{FH,lb}}(\epsilon) \approx \sup\left\{R: \sum_{n=1}^{\infty} q_n \tau^{f(v,n,\gamma)} > 1-\epsilon \right\},
\end{eqnarray}
 where $f(v,n,\gamma)$ is given in (\ref{kuyyywh3}).
As $\tau$ is a decreasing function in terms of $R$, we can write
\begin{eqnarray}
R^{(3)}_{\mathrm{FH,lb}}(\epsilon) &=& -\frac{\gamma \ln (\tau^*)}{\ln 2},
\end{eqnarray}
where
\begin{eqnarray} \label{kuyyywh5}
 \tau^* = \inf \left\{ \tau:  \sum_{n=1}^{\infty} q_n \tau^{f(v,n,\gamma)} > 1-\epsilon \right\}.
\end{eqnarray}
Therefore,
\begin{eqnarray}
 \max_v R^{(3)}_{\mathrm{FH,lb}}(\epsilon)  = -\frac{\gamma}{\ln 2} \ln \left( \min_v \tau^* \right).
\end{eqnarray}
It can be shown that $\min_v \tau^*$ occurs when $f(v,n,\gamma)$ takes its minimum value over $v$ for each $n$. But, as $\gamma \ll 1$, the term $\big(\frac{n-1}{v}\gamma+1\big)^{1-a(v,n)} \approx 1$ and hence, $f(v,n,\gamma) \approx 2^{\mathscr{H} (v,n)}$, which is uniformly minimized for all values of $n$ by taking $v=u$. This gives $\tau^* = 1-\epsilon$, and hence,
\begin{eqnarray} \label{kuyyywh4}
 R^{(3)}_{\mathrm{FH,lb}}(\epsilon) = - \gamma \log (1-\epsilon).
\end{eqnarray}
This is exactly the outage capacity of a point-to-point system without interference. Therefore, in the low SNR regime, interference has no destructive effect on the outage capacity justifying the optimality of $v=u$.

Fig. \ref{kuyyyw1} presents the plot of $R_{\mathrm{FH,lb}}^{(1)}(0.1)$ versus $v$ for $\gamma = -10 \mathrm{dB}$ in a system with $q_{1}=q_{2}=0.5$ and $u=10$.  It is seen that $v_{\mathrm{opt}}=u=10$, which is expected by our analysis. Also, we obseve that the outage capacity is not quite sensitive to the value of $v$.

\begin{figure}[h!b!t]
  \centering
  \includegraphics[scale=.6] {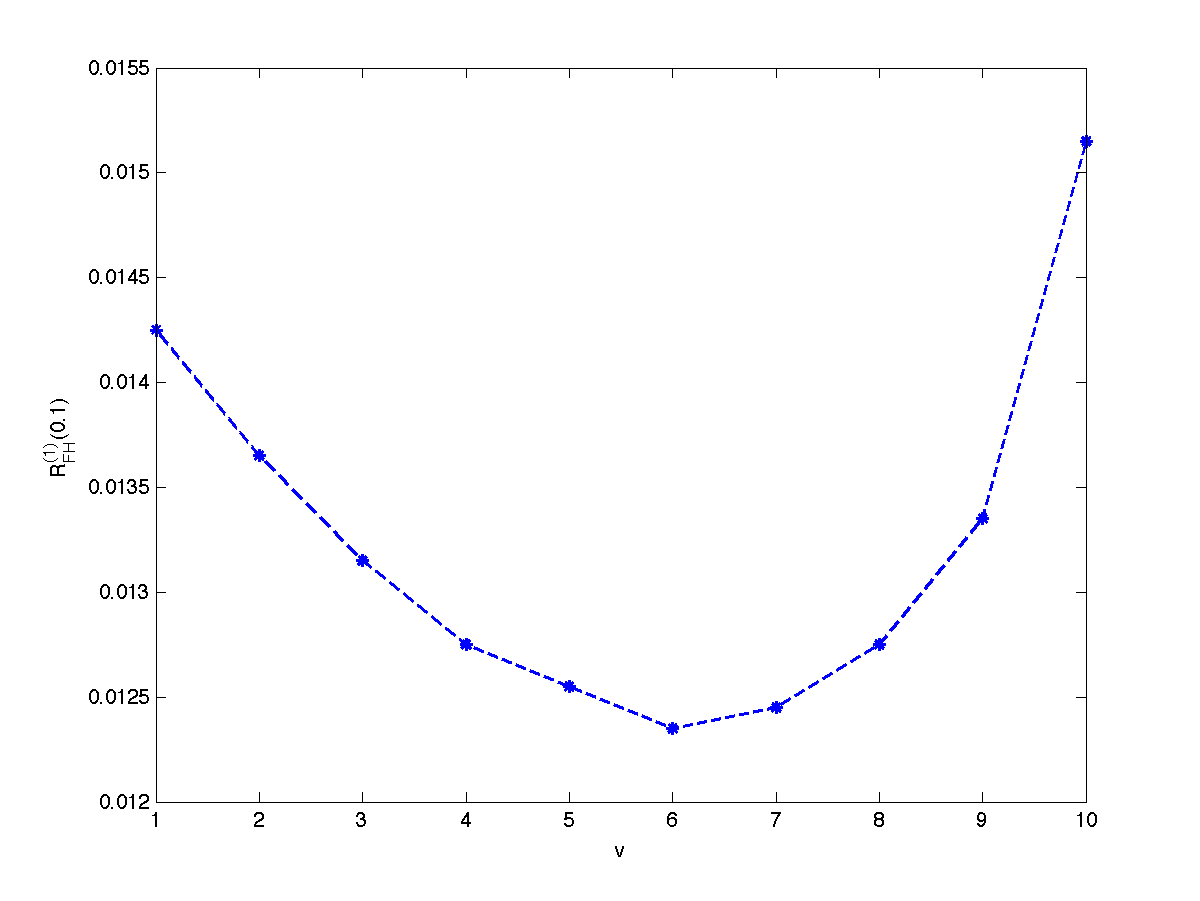}
  \caption{Sketch of $R_{\mathrm{FH,lb}}^{(1)}(0.1)$ in a network with $(q_{1},q_{2})=(0.5,0.5)$, $\gamma=-10\mathrm{dB}$ and $u=10$.  }
  \label{kuyyyw1}
 \end{figure}
\subsection{Asymptotically high $\gamma$}

Recalling the expression of $ R^{(3)}_{\mathrm{FH,lb}}(\epsilon)$ given in (\ref{kuyyywh0}), we have
\begin{eqnarray} \label{kuyyywh6}
R^{(3)}_{\mathrm{FH,lb}}(\epsilon) = \sup  \left\{ R: \sum_{n=1}^{\infty} q_n \exp \left(- \frac{vf(v,n,\gamma)\left(2^{R/v}-1\right)}{\gamma}\right) > 1-\epsilon \right\}.
\end{eqnarray}
 As $\gamma \to \infty$, the term $f(v,n,\gamma)$ grows polynomially with $\gamma$ for any $n\geq 1$. Assuming there exists $n_{\max}$ such that $\Pr \{N > n_{\max}\}=0$, the outage event is determined by the term with the maximum $f(v,n,\gamma)$. In fact, since $f(v,n,\gamma)$ is an increasing function in terms of $n$, it follows that $\sum_{n=1}^{\infty} q_n \exp \left(- \frac{vf(v,n,\gamma)\left(2^{R/v}-1\right)}{\gamma}\right)\approx \sum_{n=1}^{n_{\max}-1}q_{n}+q_{n_{\max}}\exp \left(- \frac{v f(v,n_{\max},\gamma) \left(2^{R/v}-1\right)}{\gamma} \right)$. Therefore,  (\ref{kuyyywh6}) simplifies to
\begin{eqnarray}
R^{(3)}_{\mathrm{FH,lb}}(\epsilon) \approx\sup  \left\{ R: \exp \left(- \frac{v f(v,n_{\max},\gamma) \left(2^{R/v}-1\right)}{\gamma} \right) > 1-\frac{\epsilon}{q_{n_{\max}}} \right\},
\end{eqnarray}
which gives
\begin{eqnarray}
R^{(3)}_{\mathrm{FH,lb}}(\epsilon) &\approx& v \log \left( 1- \frac{\gamma \ln \left( 1-\frac{\epsilon}{q_{n_{\max}}}\right)}{v f(v,n_{\max},\gamma)}\right) \notag\\
&=& v \log \left( 1- \frac{\gamma 2^{-\mathscr{H}(v,n_{\max})}  \ln \left( 1-\frac{\epsilon}{q_{n_{\max}}}\right)}{v \left(1+\frac{n_{\max}-1}{v}\gamma \right)^{1-\left(1-\frac{v}{u}\right)^{n_{\max}-1}}}\right) \notag\\
&\stackrel{(a)}{\approx}& v \log \left( 1- \kappa_v  \gamma^{\left(1-\frac{v}{u}\right)^{n_{\max}-1}} \ln \left( 1-\frac{\epsilon}{q_{n_{\max}}}\right) \right), \notag\\
  \label{kuyyywh10}
\end{eqnarray}
where
\begin{equation}\kappa_v \triangleq \frac{2^{-\mathscr{H}(v,n_{\max})}}{v}\left(\frac{n_{\max}-1}{v}\right)^{\left(1-\frac{v}{u}\right)^{n_{\max}-1}-1}\end{equation} and $(a)$ follows from the assumption that $\gamma$ lies in the high SNR range. It is observed that the maximization of the above expression with respect to $v$ is equivalent to the maximization of the term $v \left(1-\frac{v}{u}\right)^{n_{\max}-1}$ with respect to $v$, as SNR tends to infinity. This yields
\begin{eqnarray} \label{vopt}
 v_{\mathrm{opt}} = \left\lceil \frac{u}{n_{\max}} \right\rceil.
\end{eqnarray}

Fig. \ref{kuyyyw2} shows the curves of $R_{\mathrm{FH,lb}}^{(1)}(0.1)$ versus $v$ for different values of SNR in a system with parameters  $n_{\max}=2$, $(q_{1},q_{2})=(0.5,0.5)$ and $u=10$.  In general, for a sufficiently large, however finite, value of SNR, one can obtain the optimum value for $v$ by
\begin{eqnarray}
v_{\mathrm{opt}} = \arg \, \, \max_{v} \, \left\{ v \log \left( 1- \kappa_v  \gamma^{\left(1-\frac{v}{u}\right)^{n_{\max}-1}} \ln \left( 1-\frac{\epsilon}{q_{n_{\max}}}\right) \right)\right\}.
\end{eqnarray}
For example, it is easy to verify that in a system with the above parameters at $\gamma = 40 \mathrm{dB}$, one gets $v_{\mathrm{opt}}=4$, while $\left\lceil \frac{u}{n_{\max}} \right\rceil=\left\lceil \frac{10}{2} \right\rceil=5$. This is in agreement with the plot of $R_{\mathrm{FH,lb}}^{(1)}(0.1)$ given in fig. \ref{kuyyyw2} for $\gamma = 40 \mathrm{dB}$.

\begin{figure}[h!b!t]
  \centering
  \includegraphics[scale=.6] {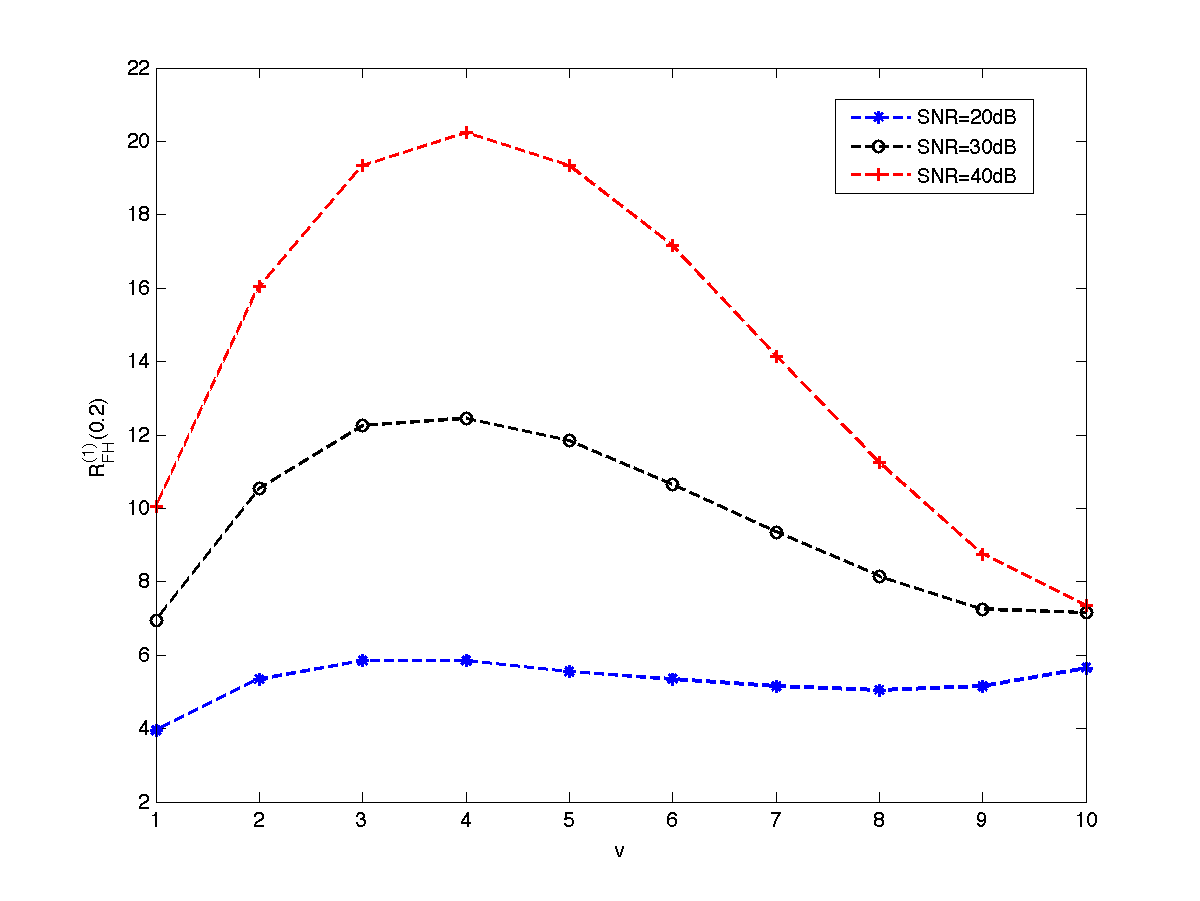}
  \caption{Sketch of $R_{\mathrm{FH,lb}}^{(1)}(0.1)$ at different SNR levels in a network with $n_{\max}=2$, $(q_{1},q_{2})=(0.5,0.5)$ and $u=10$.  }
  \label{kuyyyw2}
 \end{figure}

 \textit{Remark -} In \cite{kamyar1}, we introduced a generalized version of the FH scheme (GFH scheme) where the number of selected sub-bands by each user changes independently from transmission to transmission. It is clear that the achievable rate of this generalized scheme can be higher than the case where all users only hop over a fixed number of  frequency sub-bands. However, unlike \cite{kamyar1} where we could analytically derive the optimum hopping pattern in the generalized FH scenario for the average sum-rate in the high SNR regime, it is not possible to find a clean mathematical formulation (or tight bounds) for outage capacity in the same scenario.

  \section{Comparison with other Schemes}
In this section, we compare the performance of the proposed FH scenario with that of the FD scheme in terms of the $\epsilon$-outage capacity. Frequency Division is a well-known and simple resource allocation scheme that is widely used in the current wireless systems. Based on the number of existing licensees requesting service in the FD scenario, denoted by  $n_{\mathrm{des}}$, the spectrum is primarily divided  into $n_{\mathrm{des}}$ bands\footnote{Each band might consist of several sub-bands. Also, it is assumed that $n_{\mathrm{des}}$ divides $u$.} and each licensed user only occupies one band upon activation.
In the case that all the licensed users are active all the time, i.e., $N=n_{\mathrm{des}}$, this scheme results in the most efficient usage of the bandwidth. This makes the FD scenario superior to other schemes proposed in the literature, especially in the high SNR regime. However, in a practical situation, the number of concurrently active users is much smaller than $n_{\mathrm{des}}$. This makes FD  highly inefficient on the heels that a considerable portion of the sub-bands is unused. In the simulations at the end of this section, we make an assumption that the number of active users $N$ does not exceed $n_{\max}$, almost surely. However, $n_{\max}$ is strictly less\footnote{Usually, $n_{\max} \ll n_{\mathrm{des}}$.}  than $n_{\mathrm{des}}$. In addition to FD, we also study the $\epsilon$-outage capacity of the FBS scenario, which is a spcial case of FH. In fact, FD and FBS can be considered as two extreme spectrum management schemes where the former avoids any interference among the users, while the latter makes all users share the same spectrum all the time. In the sequel, we compute  $R_{\mathrm{FD}}(\epsilon)$ and $R_{\mathrm{FBS}}(\epsilon)$.

\subsection{ Computation of $R_{\mathrm{FD}}(\epsilon)$}

In the FD scenario, the spectrum is already divided into $n_{\mathrm{des}}$ non-overlaping bands each containing $\frac{u}{n_{\mathrm{des}}}$ sub-bands. Each user that becomes active occupies one of the units. As there is no interference among users, the outage event for the $i^{th}$ user can be written as
\begin{equation}
\mathcal{O}_{i,\mathrm{FD}} (R)=\left\{h_{i,i}: \frac{u}{n_{\mathrm{des}}}\log \left(1+\frac{n_{\mathrm{des}}\vert h_{i,i}\vert^{2}\gamma}{u} \right)<R\right\}.
\end{equation}
As $h_{i,i}$ is a complex Gaussian random variable with variance $\frac{1}{2}$ per dimension, $|h_{i,i}|^{2}$
 is an exponential random variable with parameter one. Thus,
\begin{equation}
\Pr\{\mathcal{O}_{i,\mathrm{FD}} (R)\}=1-\exp\left(\frac{(1-2^{\frac{n_{\mathrm{des}}R}{u}})u}{n_{\mathrm{des}}}\frac{1}{\gamma}\right),
\end{equation}
and
\begin{eqnarray}
\label{pq}
R_{\mathrm{FD}}(\epsilon)&=&\sup\left\{R: \Pr\{\mathcal{O}_{i,\mathrm{FD}} (R)\}<\epsilon\right\}\notag\\&=&\sup\left\{R: \exp\left(\frac{(1-2^{\frac{n_{\mathrm{des}}R}{u}})u}{n_{\mathrm{des}}}\frac{1}{\gamma}\right)>1-\epsilon\right\}\notag\\&=&\frac{u}{n_{\mathrm{des}}}\log\left(1-\frac{n_{\mathrm{des}}\gamma\ln(1-\epsilon)}{u}\right).
\end{eqnarray}

\subsection{ Computation of $R_{\mathrm{FBS}}(\epsilon)$}

  Using (\ref{tg}), the following Proposition yields $ R_{\mathrm{FBS}}(\epsilon)$.
  \begin{proposition}

  \begin{equation}
  \label{pqq}
  R_{\mathrm{FBS}}(\epsilon)=\sup\left\{R: \exp\left(\frac{(1-2^{\frac{R}{u}})u}{\gamma}\right)\sum_{n=1}^{\infty}q_{n}2^{-\frac{(n-1)R}{u}}>1-\epsilon\right\}.
  \end{equation}
  \end{proposition}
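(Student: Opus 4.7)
The plan is to compute $\Pr\{\mathcal{O}_{i,\mathrm{FBS}}(R)\}$ in closed form by exploiting the fact that in the FBS regime the achievable rate $\mathscr{R}_{i,\mathrm{FBS}}(\vec{h}_i)$ in (\ref{tg}) is a genuine (not lower-bounded) expression, and then to invert the resulting non-outage probability to read off the supremum.

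First, I would rewrite the inequality $\mathscr{R}_{i,\mathrm{FBS}}(\vec{h}_i)<R$ explicitly. Using (\ref{tg}) and noting that in FBS every interferer spreads its power over all $u$ sub-bands, the largest (and only relevant) crossover coefficient is $c_{i,L_i}=\sum_{k\neq i}|h_{k,i}|^2$. Solving for $|h_{i,i}|^2$ shows that the outage event is
$$|h_{i,i}|^{2}<(2^{R/u}-1)\left(\sum_{k\neq i}|h_{k,i}|^{2}+\frac{u}{\gamma}\right).$$
This makes the outage probability expressible purely in terms of $|h_{i,i}|^{2}$ and the $|h_{k,i}|^{2}$'s together with the random cardinality $N$.

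Next, I would compute the non-outage probability by conditioning. Because $h_{i,i}$ is a zero-mean unit-variance complex Gaussian, $|h_{i,i}|^2$ is $\mathrm{Exp}(1)$, so conditioning on $N=n$ and the interferers' gains yields
$$\Pr\{\text{no outage}\mid N=n,\{h_{k,i}\}_{k\neq i}\}=\exp\!\left(-(2^{R/u}-1)\Big(\sum_{k\neq i}|h_{k,i}|^{2}+\tfrac{u}{\gamma}\Big)\right).$$
Then I would average over the $n-1$ i.i.d.\ $\mathrm{Exp}(1)$ crossover powers using the elementary MGF identity $\mathrm{E}[e^{-s|h_{k,i}|^{2}}]=1/(1+s)$. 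With $s=2^{R/u}-1$ this contributes a factor $2^{-(n-1)R/u}$. Finally, averaging over $N$ against the PMF $\{q_n\}$ gives
$$\Pr\{\text{no outage}\}=\exp\!\left(\frac{(1-2^{R/u})u}{\gamma}\right)\sum_{n=1}^{\infty}q_{n}2^{-(n-1)R/u}.$$
Requiring this to exceed $1-\epsilon$ and taking the supremum over $R$ yields exactly (\ref{pqq}).

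There is no real obstacle here: once one recognizes that the achievable rate $\mathscr{R}_{i,\mathrm{FBS}}$ is a genuine equality (not just a lower bound from EPI) and that $c_{i,L_i}$ collapses to a single sum of i.i.d.\ exponentials in the FBS case, the computation is essentially a one-line MGF evaluation. The only mildly delicate point is handling the $n=1$ term cleanly, where the interferer sum is empty and the factor $2^{-(n-1)R/u}$ correctly reduces to $1$, so that the $N=1$ contribution matches the point-to-point outage in additive Gaussian noise.
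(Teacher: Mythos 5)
Your proof is correct and follows essentially the same route as the paper's Appendix E: condition on $N=n$ and the aggregate interference, use that $|h_{i,i}|^2$ is $\mathrm{Exp}(1)$, and evaluate the expectation $\mathrm{E}\{\exp((1-2^{R/u})\mathcal{J}_n)\}=2^{-(n-1)R/u}$ (the paper invokes the $\chi^2_{2(n-1)}$ MGF where you multiply the $n-1$ individual exponential MGFs, which is the same computation). No gaps.
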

  \begin{proof}
  See appendix E.
  \end{proof}

\subsection{Asymptotic Comparison of $R_{\mathrm{FH}}(\epsilon)$, $R_{\mathrm{FD}}(\epsilon)$, and $R_{\mathrm{FBS}}(\epsilon)$}
\subsubsection{Asymptotically small $\epsilon$}

Noting that $\ln(1-\epsilon)\approx-\epsilon$, we get $\log\left(1-\frac{n_{\mathrm{des}}\gamma\ln(1-\epsilon)}{u}\right)\approx \frac{n_{\mathrm{des}}\gamma\epsilon}{u\ln2}$. Therefore,
\begin{equation}
 R_{\mathrm{FD}}(\epsilon)\approx\frac{\gamma\epsilon}{\ln2},
\end{equation}
which is the maximum achievable $\epsilon$-outage capacity in the underlying network for small values of $\epsilon$. As for FBS, using (\ref{pqq}) and noting that $\frac{R}{u} \ll 1$, we get
\begin{eqnarray}
R_{\mathrm{FBS}}(\epsilon) &\approx&\sup\left\{R: \left(1-\frac{R \ln 2}{\gamma}\right)\sum_{n=1}^{\infty}q_{n} \left(1-\frac{(n-1)R \ln 2}{u} \right)>1-\epsilon\right\} \notag\\
&\stackrel{(a)}{\approx}& \sup\left\{R: 1-\frac{R \ln 2}{\gamma} -\left(\sum_{n=1}^{\infty}\frac{q_n(n-1)}{u}\right)R\ln2 >1-\epsilon\right\} \notag\\
&=& \frac{\epsilon \gamma}{\left( 1+ \gamma \sum_{n=1}^{\infty} \frac{q_n(n-1)}{u}\right)\ln2} \notag\\
&=& \frac{\epsilon \gamma}{\left( 1+  \frac{\mathrm{E}\{N\}-1}{u}\gamma\right)\ln2} \label{kuyyywh8}
\end{eqnarray}
where in $(a)$ we have neglected terms proportionate to $R^{2}$.
Comparing (\ref{kuyyywh7}) and (\ref{kuyyywh8}) reveals that $R_{\mathrm{FH,lb}}^{(3)} (\epsilon)$ is  larger than $R_{\mathrm{FBS}}(\epsilon)$ in the low $\epsilon$ regime as far as (\ref{hnmj}) is satisfied. Furthermore, in the case that $\mathrm{E}\{N\} \ll \frac{u}{\gamma}$, the above equation implies that the FBS scheme, and consequently the FH scenario, achieves the optimal performance of the FD scheme.

\subsubsection{Asymptotically small $\gamma$}
Using (\ref{pq}) and (\ref{pqq}), it can be realized that for the asymptotically small $\gamma$,
\begin{eqnarray}
 R_{\mathrm{FD}}(\epsilon) = R_{\mathrm{FBS}}(\epsilon) = -\gamma\log (1-\epsilon),
\end{eqnarray}
which is the same value obtained in the previous section for $R_{\mathrm{FH}}(\epsilon)$ (by setting $v=u$) and is the maximum achievable outage capacity in the network. Therefore, in this regime, spectrum division and spectrum sharing both achieve the optimal performance. Furthermore,  the Gaussian lower bound given in (\ref{Gaussian}) implies that the FH scheme is optimal in the low SNR regime, regardless of $v$. However, if we use the proposed lower bounds (e.g., $R_{\mathrm{FH}}^{(3)} (\epsilon)$), due to the presence of  the term $2^{\mathscr{H}(v,n)}$ in the expression of such lower  bounds, taking $v < u$ does not yield this conclusion.

\subsubsection{Asymptotically high $\gamma$}
From (\ref{pq}),
\begin{eqnarray}
R_{\mathrm{FD}}(\epsilon) &\approx& \frac{u}{n_{\mathrm{des}}} \log \gamma + \frac{u}{n_{\mathrm{des}}}\log \left( -\frac{n_{\mathrm{des}}}{u} \ln (1-\epsilon)\right)\notag\\
&\approx& \frac{u}{n_{\mathrm{des}}} \log \gamma,
\label{kuyyywh9}
\end{eqnarray}
 as $\gamma \to \infty$. Also, (\ref{pqq}) indicates that $R_{\mathrm{FBS}}(\epsilon)$ saturates as $\gamma$ tends to infinity and as such, FBS is highly inefficient in this regime. In fact,
    \begin{equation}
 \lim_{\gamma\to \infty} R_{\mathrm{FBS}}(\epsilon)=\sup\left\{R: \sum_{n=1}^{\infty}q_{n}2^{-\frac{(n-1)R}{u}}>1-\epsilon\right\}.
  \end{equation}
  For example, in a system where $n_{\max}=3$ and $(q_{1},q_{2},q_{3})=(0.5,0.3,0.2)$,
    \begin{eqnarray}
 \lim_{\gamma\to \infty} R_{\mathrm{FBS}}(\epsilon)=\sup\left\{R: 0.5+0.3\times2^{-\frac{R}{u}}+0.2\times2^{-\frac{2R}{u}}>1-\epsilon\right\}=u\log\frac{4}{\sqrt{49-80\epsilon}-3}
  \end{eqnarray}
  for any $\epsilon\leq 0.2$.

 In case of FH, (\ref{kuyyywh10}) yields,
\begin{eqnarray}
 \max_{v}R^{(3)}_{\mathrm{FH,lb}}(\epsilon) &\geq& \left \lceil \frac{u}{n_{\max}}\right \rceil \left( 1- \frac{\left \lceil \frac{u}{n_{\max}}\right \rceil}{u}\right)^{n_{\max}-1}\log \gamma+\left \lceil \frac{u}{n_{\max}}\right \rceil \log \left(-\kappa_{\left \lceil \frac{u}{n_{\max}}\right \rceil } \ln \left( 1-\frac{\epsilon}{q_{n_{\max}}}\right) \right). \notag\\
\end{eqnarray}
For simplicity, let us assume that $n_{\max}$ divides $u$. Then,
\begin{eqnarray} \label{kuyyywh11}
\max_{v}R^{(3)}_{\mathrm{FH,lb}}(\epsilon) \geq \frac{u}{n_{\max}} \left( 1-\frac{1}{n_{\max}} \right)^{n_{\max}-1} \log \gamma+ \frac{u}{n_{\max}} \log \left(-\kappa_{\frac{u}{n_{\max}}} \ln \left( 1-\frac{\epsilon}{q_{n_{\max}}}\right) \right).
\end{eqnarray}
Comparing (\ref{kuyyywh9}) and (\ref{kuyyywh11}), we obtain
\begin{eqnarray}
 \lim_{\gamma \to \infty} \frac{\max_{v}R_{\mathrm{FH}}(\epsilon)}{R_{\mathrm{FD}}(\epsilon)} &\geq& \lim_{\gamma \to \infty}
\frac{\max_{v}R^{(3)}_{\mathrm{FH,lb}}(\epsilon)}{R_{\mathrm{FD}}(\epsilon)} \notag\\
&=& \frac{n_{\mathrm{des}} \left( 1-\frac{1}{n_{\max}} \right)^{n_{\max}-1}}{n_{\max}} \notag\\
&\stackrel{(a)}{>}& \frac{n_{\mathrm{des}}}{e n_{\max}},
\end{eqnarray}
where $(a)$ follows from the fact that $\left( 1-\frac{1}{n_{\max}} \right)^{n_{\max}-1} > \frac{1}{e}$ for any $n_{\max} \geq 1$. The above equation implies that $n_{\mathrm{des}} \geq e n_{\max}$ is a sufficient condition such that FH outperforms  FD in the high SNR regime.

\textit{Example 2}- Here, we consider a practical scenario with $n_{\mathrm{des}}=n_{\max}$ licensed users sharing $u=n_{\max}$ sub-bands. Each user can be active with some probability $p$ independently of other users. It is assumed that $n_{\max} \gg 1$ and $p \ll 1$ such that $\lambda\triangleq pn_{\max}$ is a constant. Let us assume that the $i^{th}$ user is active. We are looking for a sufficient condition to guarantee a better performance in terms of the $\epsilon$-outage capacity for this user in the FH scenario compared to FD. The number of users, other than the $i^{th}$ user, which are simultaneously active together with the $i^{th}$ user is a Binomial random variable with parameters $(n_{\max}-1,p)$. This random variable, denoted by $\widetilde{N}$, can be well approximated\footnote{We note that $\widetilde{N}=N-1$.} by a Poisson random variable with parameter $(n_{\max}-1)p \approx \lambda$. As $n_{\max}=n_{\mathrm{des}}$, designing the FH scenario based on this value is highly inefficient. In fact, for some $n^* \geq 1$ where $u$ is divisible by $n^*$, the outage probability in the FH system can be written as
\begin{eqnarray}
\label{hfd}
 \Pr \{\mathcal{O}_{i,\mathrm{FH}} (R)\} &=& \Pr \{\mathcal{O}_{i,\mathrm{FH}} (R) | N \leq n^*\} \Pr \{N \leq n^*\} + \Pr \{\mathcal{O}_{i,\mathrm{FH}} (R) | N > n^*\} \Pr \{N > n^*\} \notag\\
&\leq& \Pr \{\mathcal{O}_{i,\mathrm{FH}} (R) | N \leq n^*\} + \Pr \{N > n^*\} \notag\\
&\leq& \Pr \{\mathcal{O}_{i,\mathrm{FH}} (R) | N = n^*\} + \Pr \{N > n^*\}.
\end{eqnarray}
The last line follows from the fact that for a fixed $R$ the outage probability is an increasing function of the number of active users. Choosing $v=\frac{u}{n^{*}}$ in (\ref{kuyyywh10}) and selecting the transmission rate as
\begin{eqnarray} \label{kuyyywh12}
 R^{*}= \frac{u}{n^{*}} \log \left( 1- \kappa_{\frac{u}{n^{*}}} \gamma^{\left(1-\frac{1}{n^{*}}\right)^{n^{*}-1}} \ln \left( 1-\frac{\epsilon}{2}\right) \right)\notag\\
\end{eqnarray}
make $\Pr \{\mathcal{O}_{i,\mathrm{FH}} (R^{*}) | N = n^*\} \leq \frac{\epsilon}{2}$ where $\kappa_{\frac{u}{n^{*}}}=\frac{2^{-\mathscr{H}\left(\frac{u}{n^*},n^{*}\right)}n^{*}}{u}\left(\frac{n^*(n^{*}-1)}{u}\right)^{\left(1-\frac{1}{n^*}\right)^{n^{*}-1}-1}$. Furthermore, we select $n^*$ such that
\begin{eqnarray}
\Pr \{N > n^*\} \leq \frac{\epsilon}{2}.
\end{eqnarray}
Noting that
\begin{eqnarray}
\Pr \{N > n^*\} &=& \Pr \{\widetilde{N} \geq n^*\} \notag\\
&\approx& \sum_{n=n^*}^{\infty} \frac{e^{-\lambda} \lambda^n}{n!},
\end{eqnarray}
it is shown in appendix F that for small enough $\epsilon$, selecting $n^* = -\lambda \ln\epsilon$ guarantees $\Pr \{N > n^*\} \leq \frac{\epsilon}{2}$. Hence, using (\ref{kuyyywh12}),
\begin{eqnarray}
\label{poxb}
 R_{\mathrm{FH}} (\epsilon) &\stackrel{(a)}{\geq}&R^{*}\notag\\
 &\geq& \frac{u}{n^{*}} \log \left( - \kappa_{\frac{u}{n^{*}}} \gamma^{\left(1-\frac{1}{n^{*}}\right)^{n^{*}-1}} \ln \left( 1-\frac{\epsilon}{2}\right) \right)\notag\\
 &=&\frac{u}{n^{*}}\left(1-\frac{1}{n^{*}}\right)^{n^{*}-1}\log\gamma+\frac{u}{n^{*}}\log\left(-\kappa_{\frac{u}{n^{*}}}\ln\left(1-\frac{\epsilon}{2}\right)\right)\notag\\
 &\approx&\frac{u}{n^{*}}\left(1-\frac{1}{n^{*}}\right)^{n^{*}-1}\log\gamma\notag\\
 &\geq&  -\frac{u}{e \lambda \ln\epsilon} \log \gamma,
\end{eqnarray}
as
$\gamma \to \infty$. In (\ref{poxb}), $(a)$ follows by the fact that setting the transmission rate at $R^{*}$, we get $ \Pr \{\mathcal{O}_{i,\mathrm{FH}} (R^{*})\}\leq \epsilon$. This can be easily seen by (\ref{hfd}) for the particular choice of $n^{*}=-\lambda\ln\epsilon$.

In the FD scenario, since each active user is only allowed to utilize one frequency band, we have
\begin{eqnarray}
 R_{\mathrm{FD}} (\epsilon) \approx \frac{u}{n_{\mathrm{des}}} \log \gamma.
\end{eqnarray}
The above equations imply that as long as $n_{\mathrm{des}} > -e \lambda \ln\epsilon$, FH is superiour to FD. This condition can be alternatively written as $p < -\frac{1}{e \ln\epsilon}$. For example, if $u=n_{\mathrm{des}}=100$ and $\epsilon = 0.01$, the above condition is satisfied for $p < 0.08$. $\square$

\subsection{Numerical Results}
  In this section, we consider different examples to demonstrate cases where the FH scenario outperforms FD, i.e., $\max_{v}R_{\mathrm{FH}}(\epsilon)>R_{\mathrm{FD}}(\epsilon)$ for given parameters $n_{\mathrm{des}}, \{q_{n}\}_{n=1}^{n_{\max}},u, \epsilon$ and $\gamma$. We have no exact expression for $R_{\mathrm{FH}}(\epsilon)$. However, we have developed the following set of lower bounds on this quantity,
  \begin{equation}
  R_{\mathrm{FH}}(\epsilon)\geq R^{(1)}_{\mathrm{FH,lb}}(\epsilon)\geq R^{(2)}_{\mathrm{FH,lb}}(\epsilon)\geq R^{(3)}_{\mathrm{FH,lb}}(\epsilon),
  \end{equation}
  which provide us with sufficient conditions to observe supremacy of FH over FD.
  $R^{(1)}_{\mathrm{FH,lb}}(\epsilon)$ is the best lower bound, however, its computation involves multiple integrals of up to order $n_{\max}-1$ for any $n_{\max}$. Computing $R^{(2)}_{\mathrm{FH,lb}}(\epsilon)$ only involves a single integral for all values of $n_{\max}$, whereas computation of $R^{(3)}_{\mathrm{FH,lb}}(\epsilon)$ involves no integration. In the following example, we assume $n_{\max}\leq 4$. This enables us to use our best lower bound $ R^{(1)}_{\mathrm{FH,lb}}(\epsilon)$ as we are able to manipulate the double and triple integrals. In all numerical results, we also include the plots of the FBS scheme for the sake of comparison.

   \textit{Example 3}-
    Let $u=n_{\mathrm{des}}=20$, $n_{\max}=3$, $(q_{1},q_{2},q_{3})=(0.5, 0.3,0.2)$ and $\gamma=20\mathrm{dB}$. Fig. \ref{fig7} depicts $\max_{v}R^{(1)}_{\mathrm{FH,lb}}(\epsilon)$, $R_{\mathrm{FBS}}(\epsilon)$ and $R_{\mathrm{FD}}(\epsilon)$ as a function of $\epsilon$. It is seen that for $\epsilon>0.05$ the FH scenario offers a better outage capacity compared to the FD scheme. Also, for $\epsilon>0.16$, the FH scheme converges to the FBS scenario meaning that no advantage is observed by hopping over different sub-bands. Increasing the SNR to $30\mathrm{dB}$, fig. \ref{fig8} illustrates the complete dominance of FH for $\epsilon\geq 0.02$ over FBS and FD. $\square$
\begin{figure}[h!b!t]
  \centering
  \includegraphics[scale=.6] {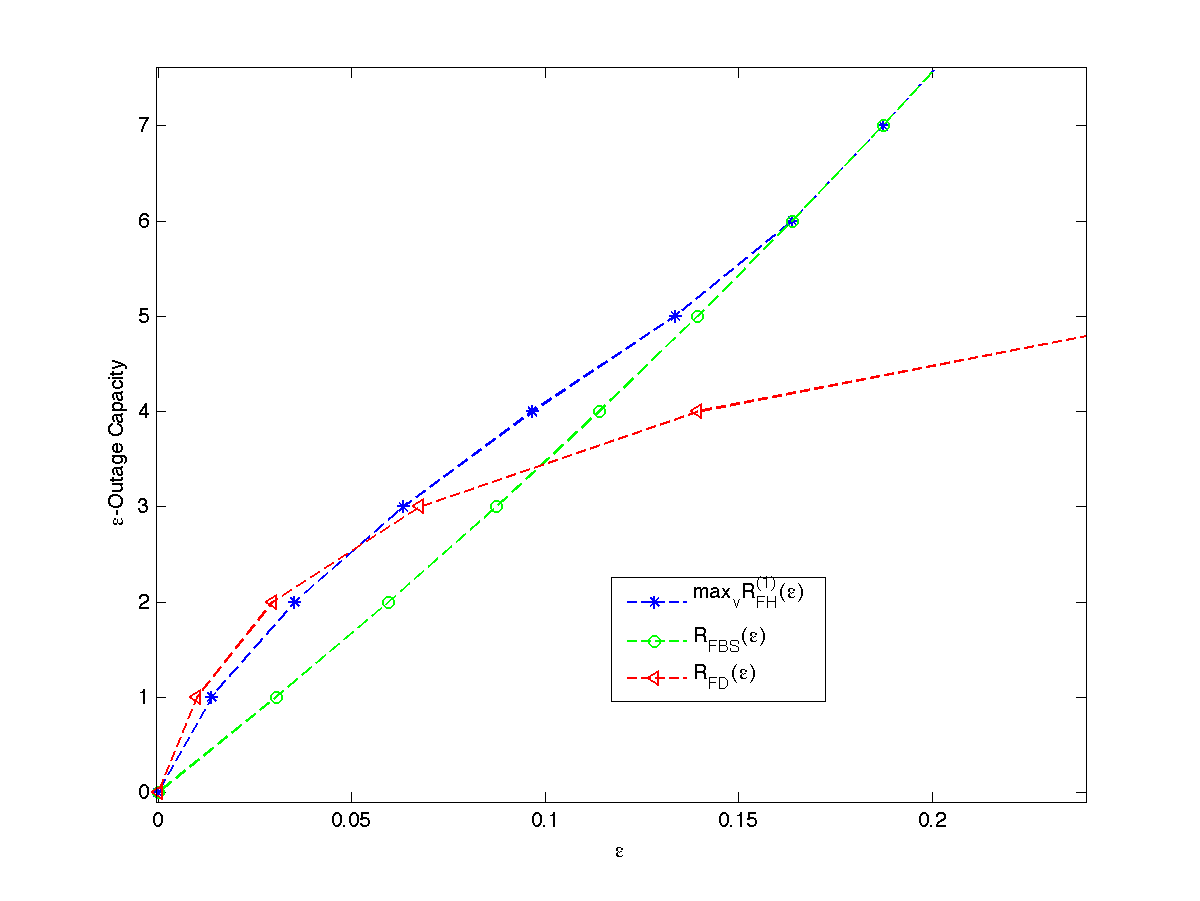}
  \caption{Comparison of FH and FD for $u=n_{\mathrm{des}}=20$, $n_{\max}=3$, $\gamma=20\mathrm{dB}$ and $(q_{1},q_{2},q_{3})=(0.5, 0.3,0.2)$.  }
  \label{fig7}
 \end{figure}
  \begin{figure}[h!b!t]
 \centering
  \includegraphics[scale=.6] {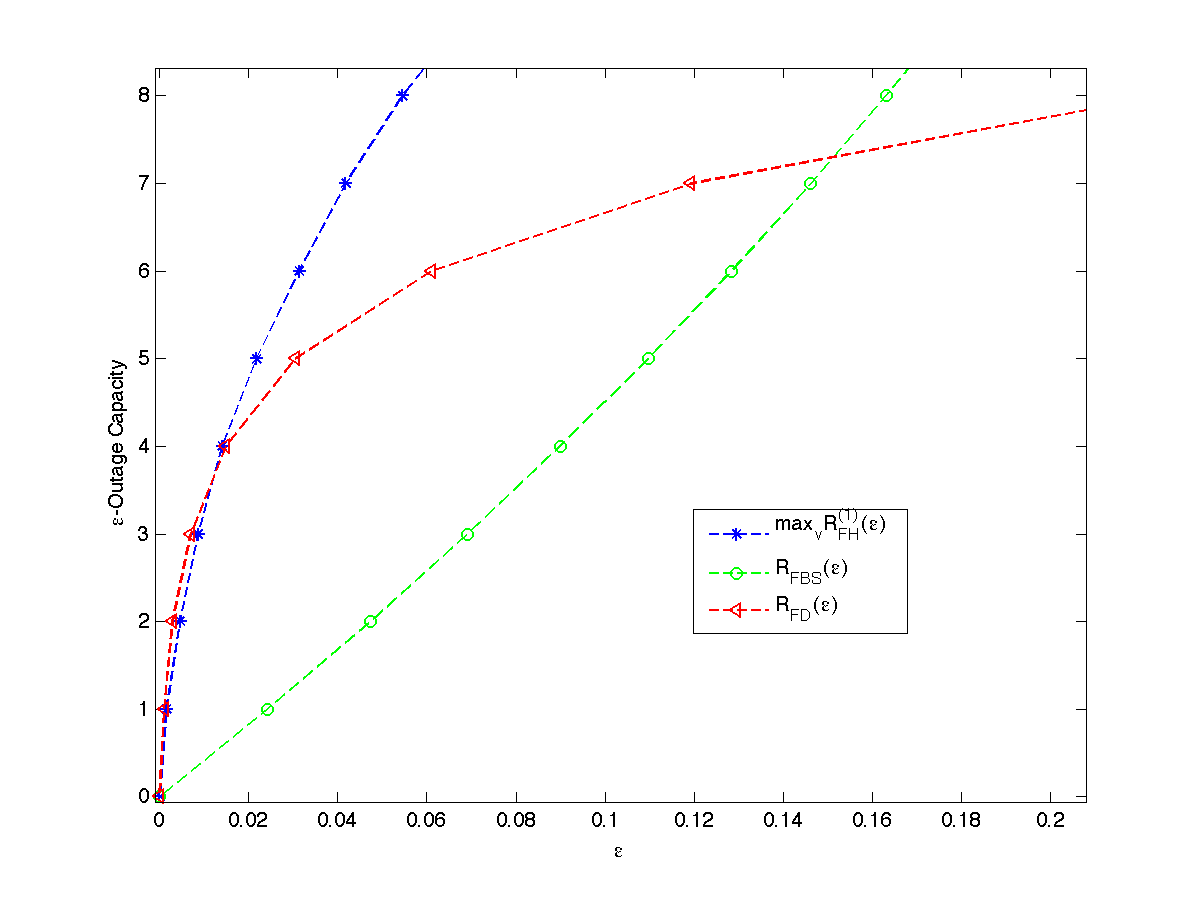}
  \caption{Comparison of FH and FD for $u=n_{\mathrm{des}}=20$, $n_{\max}=3$, $\gamma=30\mathrm{dB}$ and $(q_{1},q_{2},q_{3})=(0.5, 0.3,0.2)$.  }
  \label{fig8}
 \end{figure}

  \section{Conclusion}
  In this paper, we considered a decentralized  wireless communication network with a fixed number $u$ of
frequency sub-bands to be shared among $N$
transmitter-receiver pairs. It is assumed that the number of users $N$ is a random variable with a given distribution and the channel gains are quasi-static Rayleigh fading. The transmitters are assumed to be oblivious to the number of active users in the network as well as the channel gains. Moreover, the users are unaware of each other's codebooks and hence, no multiuser detection is possible.  We considered  the randomized Frequency Hopping scheme in which each transmitter randomly hops over $v$ out of $u$ sub-bands from transmission to transmission.  Developing a new upper bound on the differential entropy of a mixed Gaussian random vector and via entropy power inequality, we offered three lower bounds on the $\epsilon$-outage capacity for each user in the proposed scheme. Asymptotic analysis is presented in terms of  SNR and outage threshold. In the asymptotically small $\epsilon$ regime, we observed that the maximum outage capacity is obtained for either $v=1$ or $v=u$; in the asymptotically small SNR regime, we demonstrated that for all values of $v$ the system achieves the optimal performance; for asymptotically large SNR, it is shown that $v_{\mathrm{opt}}=\left\lceil \frac{u}{n_{\max}}\right\rceil$, where $n_{\max}$ is the maximum number of concurrently active users in the network. We compared the outage capacity of the underlying FH scheme with that of the FD scenario for various setups in terms of  distributions on the number of active users, SNR and $\epsilon$ and showed that FH outperforms FD in many cases.

 \section*{Appendix A; Proof of Lemma 1}
 Let us consider a $t\times 1$ complex vector mixed Gaussian distribution $p_{\vec{\Theta}}(\vec{\theta})$ with different covariance matrices $\{C_{l}\}_{l=1}^{L}$ and associated probabilities $\{p_{l}\}_{l=1}^{L}$ given by
\begin{equation}
\label{gb}
p_{\vec{\Theta}}(\vec{\theta})=\sum_{l=1}^{L}p_{l}g_{t}(\vec{\theta},C_{l}),\end{equation}
where $g_{t}(\vec{\theta},C_{l})=\frac{1}{\pi^{t}\det C_{l}}\exp\left(-\vec{\theta}^{H}C_{l}^{-1}\vec{\theta}\right)$.
 Hence,
\begin{equation} \label{pteta}
\int p_{\vec{\Theta}}(\vec{\theta})\log p_{\vec{\Theta}}(\vec{\theta})d\vec{\theta}=\sum_{l=1}^{L}J_{l}
\end{equation}
where $J_{l}=p_{l}\int g_{t}(\vec{\theta},C_{l})\log p_{\vec{\Theta}}(\vec{\theta})d\vec{\theta}$ for $1\leq l\leq L$.
To find a proper lower bound on each $J_{l}$ in this expression, we proceed as follows. We know that $C_{l}=\varrho_{l}^{2}I_{t}$ and $\varrho_{1}^{2}<\varrho_{2}^{2}<\cdots<\varrho_{L}^{2}$. We have
\begin{equation}
\label{sx}
J_{l}= p_{l}\int g_{t}(\vec{\theta},C_{l})\log\left(\sum_{m=1}^{L}p_{m}g_{t}(\vec{\theta},C_{m})\right)d\vec{\theta}.\end{equation}
On the other hand,
\begin{equation}
\label{ssx}
\log\left(\sum_{m=1}^{L}p_{m}g_{t}(\vec{\theta},C_{m})\right)
=\log \left(p_{l}g_{t}(\vec{\theta},C_{l})\right)+\log\left(1+\sum_{m=1}^{l-1}\frac{p_{m}}{p_{l}}\frac{g_{t}(\vec{\theta},C_{m})}{g_{t}(\vec{\theta},C_{l})}+\sum_{m=l+1}^{L}\frac{p_{m}}{p_{l}}\frac{g_{t}(\vec{\theta},C_{m})}{g_{t}(\vec{\theta},C_{l})}\right).\end{equation}
However, the term $\sum_{m=l+1}^{L}\frac{p_{m}}{p_{l}}\frac{g_{t}(\vec{\theta},C_{m})}{g_{t}(\vec{\theta},C_{l})}=\sum_{m=l+1}^{L}\frac{p_{m}}{p_{l}}\frac{\varrho_{l}^{2t}}{\varrho_{m}^{2t}}\exp-\big((\frac{1}{\varrho_{m}^{2}}-\frac{1}{\varrho_{l}^{2}})\vec{\theta}^{H}\vec{\theta}\big)$ is always greater than $\sum_{m=l+1}^{L}\frac{p_{m}}{p_{l}}\frac{\varrho_{l}^{2t}}{\varrho_{m}^{2t}}$. Hence,
\begin{equation}
\label{nb}
\log\left(\sum_{m=1}^{l}p_{m}g_{t}(\vec{\theta},C_{m})\right)\geq \log \left(p_{l}g_{t}(\vec{\theta},C_{l})\right)+\log\left(1+\sum_{m=l+1}^{L}\frac{p_{m}}{p_{l}}\frac{\varrho_{l}^{2t}}{\varrho_{m}^{2t}}+\sum_{m=1}^{l-1}\frac{p_{m}}{p_{l}}\frac{g_{t}(\vec{\theta},C_{m})}{g_{t}(\vec{\theta},C_{l})}\right).\end{equation}

On the other hand, the term $\sum_{m=1}^{l-1}\frac{p_{m}}{p_{l}}\frac{g_{t}(\vec{\theta},C_{m})}{g_{t}(\vec{\theta},C_{l})}=\sum_{m=1}^{l-1}\frac{p_{m}}{p_{l}}\frac{\varrho_{l}^{2t}}{\varrho_{m}^{2t}}\exp-\big((\frac{1}{\varrho_{m}^{2}}-\frac{1}{\varrho_{l}^{2}})\vec{\theta}^{H}\vec{\theta}\big)$ is always less than $\sum_{m=1}^{l-1}\frac{p_{m}}{p_{l}}\frac{\varrho_{l}^{2t}}{\varrho_{m}^{2t}}$. Now, we use the following inequality\footnote{One may verify this using Jensen's inequality and concavity of the $\log(.)$ function.}, which is valid for any $b>0$ and $0\leq x\leq a$,
\begin{equation}
\log(1+b+x)\geq (1-\frac{x}{a})\log(1+b)+\frac{x}{a}\log(1+a+b).
\end{equation}
Utilizing this in the expression on the right-hand side of (\ref{nb}), we get:
\begin{equation}
\log\left(\sum_{m=1}^{l}p_{m}g_{t}(\vec{\theta},C_{m})\right)\geq \left(1-\frac{1}{\nu_{l}}\sum_{m=1}^{l-1}\frac{p_{m}}{p_{l}}\frac{g_{t}(\vec{\theta},C_{m})}{g_{t}(\vec{\theta},C_{l})}\right)\log(1+\mu_{l})+\frac{1}{\nu_{l}}\sum_{m=1}^{l-1}\frac{p_{m}}{p_{l}}\frac{g_{t}(\vec{\theta},C_{m})}{g_{t}(\vec{\theta},C_{l})}\log(1+\nu_{l}+\mu_{l})\end{equation}
where $\mu_{l}=\sum_{m=l+1}^{L}\frac{p_{m}}{p_{l}}\frac{\varrho_{l}^{2t}}{\varrho_{m}^{2t}}$ and $\nu_{l}=\sum_{m=1}^{l-1}\frac{p_{m}}{p_{l}}\frac{\varrho_{l}^{2t}}{\varrho_{m}^{2t}}$.
Using this in (\ref{sx}) yields:
\begin{equation}
J_{l}\geq p_{l}\int g_{t}(\vec{\theta},C_{l})\log(p_{l}g_{t}(\vec{\theta},C_{l}))d\vec{\theta}+\left(p_{l}-\frac{\sum_{m=1}^{l-1}p_{m}}{\nu_{l}}\right)\log(1+\mu_{l})+\frac{\sum_{m=1}^{l-1}p_{m}}{\nu_{l}}\log(1+\nu_{l}+\mu_{l}).
\end{equation}
The first term on the right-hand side can be calculated as
\begin{eqnarray}
\label{gbb}
 p_{l}\int g_{t}(\vec{\theta},C_{l})\log\big(p_{l}g_{t}(\vec{\theta},C_{l})\big)d\vec{\theta}&=& \left(p_{l}\log p_{l}\right)\int g_{t}(\vec{\theta},C_{l})d\vec{\theta}+p_{l}\int g_{t}(\vec{\theta},C_{l})\log g_{t}(\vec{\theta},C_{l})d\vec{\theta}\notag\\
&=& p_{l}\log p_{l}+p_{l}\int g_{t}(\vec{\theta},C_{l})\log g_{t}(\vec{\theta},C_{l})d\vec{\theta}\notag\\
&\stackrel{(a)}{=}&p_{l}\log p_{l}-p_{l}\log\left((\pi e)^{t}\det C_{l}\right)\notag\\
&=&p_{l}\log p_{l}-tp_{l}\log(\pi e\varrho_{l}^{2})
\end{eqnarray}
where in $(a)$ we have used the fact that the differential entropy of a $t\times 1$ complex Gaussian vector with covariance matrix $C_{l}$ is $\log\left((\pi e)^{t}\det C_{l}\right)$.
Thus,
\begin{eqnarray}
\mathrm{h}(\vec{\Theta}) &=& -\int p_{\vec{\Theta}}(\vec{\theta})\log p_{\vec{\Theta}}(\vec{\theta})d\vec{\theta}\notag\\&=&-\sum_{l=1}^{L}J_{l} \notag\\
&\leq& -\sum_{l=1}^{L}p_{l}\log p_{l}+t\sum_{l=1}^{L}p_{l}\log\left(\pi e\varrho_{l}^{2}\right) \notag\\
&&-\sum_{l=1}^{L}\left(\left(p_{l}-\frac{\sum_{m=1}^{l-1}p_{m}}{\nu_{l}}\right)\log(1+\mu_{l})+\frac{\sum_{m=1}^{l-1}p_{m}}{\nu_{l}}\log(1+\nu_{l}+\mu_{l})\right)\notag\\&=&t\sum_{l=1}^{L}p_{l}\log\left(\pi e\varrho_{l}^{2}\right)-\sum_{l=1}^{L}p_{l}\log p_{l} \notag\\
&&-\sum_{l=1}^{L}\left(\left(p_{l}-\frac{\sum_{m=1}^{l-1}p_{m}}{\nu_{l}}\right)\log(1+\mu_{l})+\frac{\sum_{m=1}^{l-1}p_{m}}{\nu_{l}}\log(1+\nu_{l}+\mu_{l})\right)\notag\\&=&t\sum_{l=1}^{L}p_{l}\log\left(\pi e\varrho_{l}^{2}\right)-\sum_{l=1}^{L}p_{l}\log p_{l}\notag\\
&&-\sum_{l=1}^{L}\left(\left(p_{l}-\frac{\sum_{m=1}^{l-1}p_{m}}{\nu_{l}}\right)\log(1+\mu_{l})+\frac{\sum_{m=1}^{l-1}p_{m}}{\nu_{l}}\log(1+\nu_{l}+\mu_{l})\right).\notag\\\end{eqnarray}
   Briefly,
 \begin{equation}
 \mathrm{h}(\vec{Z})\leq t\sum_{l=1}^{L}p_{l}\log(\pi e\varrho_{l}^{2})+\mathscr{H}-\mathscr{G}''\end{equation}
 where
 \begin{equation}
\label{ }
\mathscr{H}=-\sum_{l=1}^{L}p_{l}\log p_{l}
\end{equation}
and
 \begin{equation}
\mathscr{G}''=\sum_{l=1}^{L}\left(\left(p_{l}-\frac{\sum_{m=1}^{l-1}p_{m}}{\nu_{l}}\right)\log(1+\mu_{l})+\frac{\sum_{m=1}^{l-1}p_{m}}{\nu_{l}}\log(1+\nu_{l}+\mu_{l})\right)\notag\\. \end{equation}
$\mathscr{G}''$ is a complicated function of $\{\varrho_{l}\}_{l=1}^{L}$. To simplify it, one may notice that $\mathscr{G}''$ is an increasing function of $\mu_{l}$. Hence, using $\mu_{l}\geq 0$, we get a lower bound on $\mathscr{G}''$, namely $\mathscr{G}'$ given by
\begin{equation}
\mathscr{G}'=\sum_{l=2}^{L}\frac{\log(1+\nu_{l})}{\nu_{l}}\sum_{m=1}^{l-1}p_{m}.\end{equation}
 On the other hand, using the fact that $\frac{\log(1+x)}{x}$ is a decreasing function of $x$, one may obtain a lower bound on $\mathscr{G}'$ by finding an upper bound on $\nu_{l}$ for each $l$. One option is $\nu_{l}\leq \frac{\varrho_{L}^{2t}}{\varrho_{1}^{2t}}\frac{\sum_{m=1}^{l-1}p_{m}}{p_{l}}$. Thus, we come up with the following lower bound on $\mathscr{G}'$
 \begin{equation}
 \mathscr{G}'\geq \mathscr{G}\triangleq\frac{\varrho_{1}^{2t}}{\varrho_{L}^{2t}}\sum_{l=2}^{L}p_{l}\log\left(1+\frac{\varrho_{L}^{2t}}{\varrho_{1}^{2t}}\frac{\sum_{m=1}^{l-1}p_{m}}{p_{l}}\right).
 \end{equation}

\section*{Appendix B; Computation of $R^{(1)}_{\mathrm{FH,lb}}(\epsilon)$}
By (\ref{kood}),
  \begin{equation}\Pr\left\{\mathscr{R}^{(1)}_{i,\mathrm{lb}}(\vec{h}_{i})<R\right\}=\Pr\left\{v\log\left(\frac{2^{-\mathscr{H}(v,N)}2^{\alpha_{N}(\mathcal{I}_{N-1,1};\frac{\gamma}{v},\frac{v}{u})}\vert h_{i,i}\vert^{2}\gamma}{v\prod_{m=1}^{N-1}\prod_{m'=1}^{{N-1\choose m}}\left(\frac{\gamma}{v}\mathcal{I}_{m,m'}+1\right)^{\beta_{m,N}\left(\frac{v}{u}\right)}}+1\right)<R\right\}\end{equation}
  where $\beta_{m,N}\left(\frac{v}{u}\right)=(\frac{v}{u})^{m}\left(1-\frac{v}{u}\right)^{N-1-m}$ and for each $m$, $\{\mathcal{I}_{m,m'}\}_{m'=1}^{{N-1\choose m}}$ consists of all possible summations of $m$ elements in the set $\{\vert h_{j,i}\vert^{2}\}_{j=1,j\neq i}^{N}$, such that $\prod_{m=1}^{N-1}\prod_{m'=1}^{{N-1\choose m}}(\frac{\gamma}{v}\mathcal{I}_{m,m'}+1)^{\beta_{m,N}\left(\frac{v}{u}\right)}=\prod_{l=1}^{L_{i}}(c_{i,l}\gamma+1)^{a_{i,l}}$. We have also substituted $\mathscr{G}_i$ by $\mathscr{G}_{i,\mathrm{lb}} (v,N) =  \alpha_{N}(\mathcal{I}_{N-1,1};\frac{\gamma}{v},\frac{v}{u})$. Since $N$ itself is a random variable, the outage probability can be written as
  \begin{equation}
  \Pr\{\mathscr{R}^{(1)}_{i,\mathrm{lb}}(\vec{h}_{i})<R\}=q_{1}\xi_{1}+\sum_{n=2}^{\infty}q_{n}\xi_{n},
    \end{equation}
    where
        \begin{eqnarray}
    \xi_{1}&=&\Pr\{\mathscr{R}^{(1)}_{i,\mathrm{lb}}(\vec{h}_{i})<R|N=1\}\notag\\
&=&\Pr\left\{v\log\left(1+\frac{|h_{i,i}|^2\gamma}{v}\right)<R\right\} \notag\\ &=& 1-\exp\left(\frac{\left(1-2^{\frac{R}{v}}\right)v}{\gamma}\right).
        \end{eqnarray}
      Denoting the collection of random variables $\{\mathcal{I}_{m,m'}\}_{\substack{1\leq m\leq n-1\\1\leq m'\leq {n-1\choose m}}}$ by $\mathscr{I}_n$,
        \begin{eqnarray}
        \xi_{n}&=&\Pr\{\mathscr{R}^{(1)}_{i,\mathrm{lb}}(\vec{h}_{i})<R|N=n\}\notag\\
&=&\Pr\left\{v\log\left(\frac{2^{-\mathscr{H}(v,n)}2^{\alpha_{n}(\mathcal{I}_{n-1,1};\frac{\gamma}{v},\frac{v}{u})}\vert h_{i,i}\vert^{2}\gamma}{v\prod_{m=1}^{n-1}\prod_{m'=1}^{{n-1\choose m}}(\frac{\gamma}{v}\mathcal{I}_{m,m'}+1)^{\beta_{m,n}\left(\frac{v}{u}\right)}}+1\right)<R\right\}\notag\\
&=&\mathrm{E}\left\{\Pr\left\{v\log\left(\frac{2^{-\mathscr{H}(v,n)}2^{\alpha_{n}(\mathcal{I}_{n-1,1};\frac{\gamma}{v},\frac{v}{u})}\vert h_{i,i}\vert^{2}\gamma}{v\prod_{m=1}^{n-1}\prod_{m'=1}^{{n-1\choose m}}(\frac{\gamma}{v}\mathcal{I}_{m,m'}+1)^{\beta_{m,n}\left(\frac{v}{u}\right)}}+1\right)<R\Big|\mathscr{I}_n\right\}\right\}\notag\\
&=&\mathrm{E}\left\{1-\exp\left(\frac{2^{\mathscr{H}(v,n)}\left(1-2^{\frac{R}{v}}\right)v}{\gamma}2^{-\alpha_{n}(\mathcal{I}_{n-1,1};\frac{\gamma}{v},\frac{v}{u})}\prod_{m=1}^{n-1}\prod_{m'=1}^{{n-1\choose m}}\left(\frac{\gamma}{v}\mathcal{I}_{m,m'}+1\right)^{\beta_{m,n}\left(\frac{v}{u}\right)}\right)\right\}\notag\\
&\stackrel{(a)}{=}&1-\mathrm{E}\left\{\exp\left(\frac{2^{\mathscr{H}(v,n)}\left(1-2^{\frac{R}{v}}\right)v}{\gamma}2^{-\alpha_{n}(\mathcal{I}_{n-1,1};\frac{\gamma}{v},\frac{v}{u})}\prod_{m=1}^{n-1}\prod_{m'=1}^{{n-1\choose m}}\left(\frac{\gamma}{v}\mathcal{I}_{m,m'}+1\right)^{\beta_{m,n}\left(\frac{v}{u}\right)}\right)\right\}\notag\\
&=&1-\psi_{n}\left(\frac{2^{\mathscr{H}(v,n)}\left(1-2^{\frac{R}{v}}\right)v}{\gamma},\frac{\gamma}{v},\frac{v}{u}\right),     \end{eqnarray}
where $(a)$ follows from the fact that after conditioning on $\mathscr{I}_n$, the only random variable is $\vert h_{i,i}\vert^{2}$, which is exponentially distributed.
        As a result, using the definition of $\psi_n $,
        \begin{equation}
        R^{(1)}_{\mathrm{FH,lb}}(\epsilon)=\sup\left\{ R:q_{1}\exp\left(\frac{\left(1-2^{\frac{R}{v}}\right)v}{\gamma}\right)+\sum_{n=2}^{\infty}q_{n}\psi_{n}\left(\frac{2^{\mathscr{H}(v,n)}\left(1-2^{\frac{R}{v}}\right)v}{\gamma},\frac{\gamma}{v},\frac{v}{u}\right)>1-\epsilon\right\}.
        \end{equation}

        \section*{Appendix C; Computation of $R^{(2)}_{\mathrm{FH,lb}}(\epsilon)$}
        By (\ref{koodie}),
        \begin{equation}
        \mathscr{R}^{(2)}_{i,\mathrm{lb}} (\vec{h}_i)=v\log\left(\frac{2^{-\mathscr{H}(v,N)}2^{\mathscr{G}_{i,\mathrm{lb}}(v,N)}\vert h_{i,i}\vert^{2}\gamma}{v(\frac{\gamma}{v}\mathcal{J}_{N}+1)^{1-a(v,N)}}+1\right),               \end{equation}
        where
\begin{equation}
\label{chos1}
\mathcal{J}_{N}=\left\{\begin{array}{cc }
    \sum_{j=1,j\neq i}^{N}\vert h_{j,i}\vert^{2}  & N>1   \\
    0  &   N=1
\end{array}\right..
\end{equation}

        Thus, following the same lines as in appendix B, we have
        \begin{equation}
        \Pr\{\mathscr{R}^{(2)}_{i,\mathrm{lb}} (\vec{h}_i)<R\}=q_{1}\xi_{1}+\sum_{n=2}^{\infty}q_{n}\xi_{n}
                \end{equation}
        where
        \begin{equation}
        \xi_{1}=1-\exp\left(\frac{\left(1-2^{\frac{R}{v}}\right)v}{\gamma}\right),
        \end{equation}
  and
  \begin{equation}
  \label{poor}
  \xi_{n}=1-\mathrm{E}\left\{\exp\left(\frac{2^{\mathscr{H}(v,n)}\left(1-2^{\frac{R}{v}}\right)v}{\gamma}2^{-\alpha_{n}(\mathcal{J}_{n};\frac{\gamma}{v},\frac{v}{u})} \left(\frac{\gamma}{v}\mathcal{J}_{n}+1 \right)^{1-a(v,n)}\right)\right\},   \end{equation}
  for all $n\geq 2$. Since $2\mathcal{J}_{n}\sim\chi^{2}_{2(n-1)}$, we have $p_{\mathcal{J}_{n}}(z)=\frac{1}{(n-2)!}z^{n-2}\exp\left(-z\right)$.
  Therefore, using the definition of $\phi_n $, (\ref{poor}) can be expressed as
  \begin{equation}
  \xi_{n}=1-\phi_{n}\left( \frac{2^{\mathscr{H}(v,n)}\left(1-2^{\frac{R}{v}}\right)v}{\gamma},\frac{\gamma}{v},1-a(v,n),\frac{v}{u}\right).
    \end{equation}
    Hence, $R^{(2)}_{\mathrm{FH,lb}}(\epsilon)$ is given by
    \begin{equation}
    R^{(2)}_{\mathrm{FH,lb}}(\epsilon)= \sup\left\{R: q_{1}\exp(\frac{\left(1-2^{\frac{R}{v}}\right)v}{\gamma})+\sum_{n=2}^{\infty}q_{n}\phi_{n}\left(\frac{2^{\mathscr{H}(v,n)}\left(1-2^{\frac{R}{v}}\right)v}{\gamma},\frac{\gamma}{v},1-a(v,n),\frac{v}{u}\right)>1-\epsilon\right\}.
    \end{equation}
    \section*{Appendix D; Proof of Corollary 1}
As $b_{1}<0$ and $0<2^{-\alpha_{n}(\theta;b_{2},c_{2})}<1$ for all $\theta>0$, we have
  \begin{equation}
  \label{nooo}
  \phi_{n}(b_{1},b_{2},c_{1},c_{2})\geq \frac{1}{(n-2)!}\int_{0}^{\infty}\theta^{n-2}\exp\big(b_{1}(b_{2}\theta+1)^{c_{1}}-\theta\big)d\theta.  \end{equation}
  One may easily check that $\frac{\theta^{n-2}}{(n-2)!}\exp(-\theta)\mathbb{1}(\theta>0)$ is a PDF for some nonnegative random variable $\Theta$.  Thus, (\ref{nooo}) can be written as
  \begin{equation}
  \phi_{n}(b_{1},b_{2},c_{1},c_{2})\geq \mathrm{E}\left\{\exp\left(b_{1}(b_{2}\Theta+1)^{c_{1}}\right)\right\}.
  \end{equation}
  However, as $b_{1}<0$ and $0<c_{1}<1$, the function $\exp\big(b_{1}(b_{2}\Theta+1)^{c_{1}}\big)$ is a convex function of $\Theta$. Hence, applying Jensen's inequality yields
  \begin{equation}
  \label{peh}
  \phi_{n}(b_{1},b_{2},c_{1},c_{2})\geq \exp\big(b_{1}(b_{2}\mathrm{E}\{\Theta\}+1)^{c_{1}}\big)=\exp\Big(b_{1}\big((n-1)b_{2}+1\big)^{c_{1}}\Big), \end{equation}
  where we have used $\mathrm{E}\{\Theta\}=n-1$. Using (\ref{peh}) in (\ref{pqqq}) and noting that $b_1=\frac{\left(1-2^{\frac{R}{v}}\right)v}{\gamma}2^{\mathscr{H}(v,n)}$, $b_2=\frac{\gamma}{v}$, and $c_1=1-a(v,n)$, we get the desired lower bound.

\section*{Appendix E; Computation of $R_{\mathrm{FBS}}(\epsilon)$}
We first compute $\Pr\{\mathscr{R}_{i,\mathrm{FBS}}(\vec{h}_{i})<R\}$. By (\ref{tg}),
\begin{equation}
\Pr\{\mathscr{R}_{i,\mathrm{FBS}}(\vec{h}_{i})<R\}=\Pr\left\{u\log\left(1+\frac{\vert h_{i,i}\vert^{2}\gamma}{u(1+\frac{\gamma}{u}\mathcal{J}_{N})}\right)<R\right\},
\end{equation}
where $\mathcal{J}_N$ is defined in (\ref{chos1}).
Therefore,
\begin{eqnarray}
\Pr\{\mathscr{R}_{i,\mathrm{FBS}}(\vec{h}_{i})<R\}&=&\sum_{n=1}^{\infty}q_{n}\Pr\left\{\mathscr{R}_{i,\mathrm{FBS}}(\vec{h}_{i})<R\Big| N=n\right\}\notag\\
&=&q_{1}\Pr\left\{u\log\left(1+\frac{\vert h_{i,i}\vert^{2}\gamma}{u}\right)<R\right\}\notag\\&&+\sum_{n=2}^{N}q_{n}\Pr\left\{u\log\left(1+\frac{\vert h_{i,i}\vert^{2}\gamma}{u(1+\frac{\gamma}{u}\mathcal{J}_{n})}\right)<R\right\}.
\end{eqnarray}
The first term can be computed easily as
\begin{equation}
\Pr\left\{v\log\left(1+\frac{\vert h_{i,i}\vert^{2}\gamma}{u}\right)<R\right\}=1-\exp\left(\frac{(1-2^{\frac{R}{u}})u}{\gamma}\right).
\end{equation}
For any $n\geq 2$, one can write
\begin{equation}
\label{py}
\Pr\left\{u\log\left(1+\frac{\vert h_{i,i}\vert^{2}\gamma}{u(1+\frac{\gamma}{u}\mathcal{J}_{n})}\right)<R\right\}=\mathrm{E}\left\{\Pr\left\{u\log\left(1+\frac{\vert h_{i,i}\vert^{2}\gamma}{u(1+\frac{\gamma}{u}\mathcal{J}_{n})}\right)<R\Big| \mathcal{J}_{n}\right\}\right\}
\end{equation}
Since $\vert h_{i,i}\vert^{2}$ is an exponential random variable with parameter one,
\begin{equation}
\Pr\left\{u\log\left(1+\frac{\vert h_{i,i}\vert^{2}\gamma}{u(1+\frac{\gamma}{u}\mathcal{J}_{n})}\right)<R\Big |\mathcal{J}_{n}\right\}=1-\exp\left(\frac{(1-2^{\frac{R}{u}})(1+\frac{\gamma}{u}\mathcal{J}_{n})u}{\gamma}\right).\end{equation}
Replacing this in (\ref{py}) yields
\begin{eqnarray}
\Pr\left\{u\log\left(1+\frac{\vert h_{i,i}\vert^{2}\gamma}{u(1+\frac{\gamma}{u}\mathcal{J}_{n})}\right)<R\right\}&=&\mathrm{E}\left\{1-\exp\left(\frac{(1-2^{\frac{R}{u}})(1+\frac{\gamma}{u}\mathcal{J}_{n})u}{\gamma}\right)\right\}\notag\\&=&1-\exp\left(\frac{(1-2^{\frac{R}{u}})u}{\gamma}\right)\mathrm{E}\left\{\exp\big((1-2^{\frac{R}{u}})\mathcal{J}_{n}\big)\right\}\notag\\&=&1-\exp\left(\frac{(1-2^{\frac{R}{u}})u}{\gamma}\right)2^{-\frac{(n-1)R}{u}}\end{eqnarray}
where we have used the fact that $\mathrm{E}\{\exp(t\mathcal{J}_{n})\}=\frac{1}{(1-t)^{n-1}}$ as $2\mathcal{J}_{n}\sim\chi^{2}_{2(n-1)} $. Thus, $R_{\mathrm{FBS}}(\epsilon)$ is given by
\begin{equation}
R_{\mathrm{FBS}}(\epsilon)=\sup\left\{R: \exp\left(\frac{(1-2^{\frac{R}{u}})u}{\gamma}\right)\sum_{n=1}^{\infty}q_{n}2^{-\frac{(n-1)R}{u}}>1-\epsilon\right\}.
\end{equation}

\section*{Appendix F}
Setting $n^* = -\lambda \ln\epsilon$, we have
\begin{eqnarray}
 \Pr \{\widetilde{N} \geq n^*\} &\approx& \sum_{n=n^*}^{\infty} \frac{e^{-\lambda} \lambda^n}{n!} \notag\\
&=& e^{-\lambda} \frac{\lambda^{n^*}}{n^*!} \left( 1+ \frac{\lambda}{n^*+1} + \frac{\lambda^2}{(n^*+1)(n^*+2)} + \cdots\right) \notag\\
&\leq&  e^{-\lambda} \frac{\lambda^{n^*}}{n^*!} \left[ \sum_{j=0}^{\infty} \left( \frac{\lambda}{n^*}\right)^j\right] \notag\\
&=& e^{-\lambda} \frac{\lambda^{n^*}}{n^*!} \frac{1}{1-\frac{\lambda}{n^*}} \notag\\
&\stackrel{(a)}{\approx}& \frac{e^{-\lambda}}{\sqrt{2 \pi n^*}} \left(\frac{\lambda e}{n^*}\right)^{n^*}  \frac{1}{1-\frac{\lambda}{n^*}} \notag\\
&=&  \frac{e^{-\lambda}}{\sqrt{2 \pi n^*}} \left(\frac{e}{-\ln\epsilon}\right)^{n^*} \frac{1}{1+\frac{1}{\ln\epsilon}}
\end{eqnarray}
where $(a)$ follows by stirling approximation for $n^{*}!$.
The term $\left(\frac{e}{-\ln\epsilon}\right)^{n^*}$ can be written as
\begin{eqnarray}
\left(\frac{e}{-\ln\epsilon}\right)^{n^*} &=& e^{-n^*\left(\ln (-\ln\epsilon)-1 \right)} \notag\\
&=& e^{\lambda \left( \ln( -\ln\epsilon)-1\right)\ln\epsilon} \notag\\
&\stackrel{(a)}{\leq}& e^{\ln\epsilon} \notag\\
&=& \epsilon.
\end{eqnarray}
where $(a)$ is valid if $\lambda(\ln(-\ln\epsilon)-1)\geq1$, which is the case for sufficiently small $\epsilon$. Combining this with the fact that $\frac{e^{-\lambda}}{\sqrt{2 \pi n^*}} \frac{1}{1+\frac{1}{\ln\epsilon}} < \frac{1}{2}$ gives the desired result.

    \bibliographystyle{IEEEbib}
      
\end{document}